\documentclass[11pt]{article}
\usepackage[margin=1in]{geometry}
\usepackage{amsmath,amssymb,amsthm}
\usepackage{booktabs}
\usepackage[hidelinks]{hyperref}
\usepackage{orcidlink}
\usepackage{CJKutf8}

\newtheorem{theorem}{Theorem}
\newtheorem{lemma}[theorem]{Lemma}
\newtheorem{proposition}[theorem]{Proposition}
\theoremstyle{definition}
\newtheorem{definition}[theorem]{Definition}
\theoremstyle{remark}

\newcommand{\A}{\mathcal{A}}
\newcommand{\R}{\mathbb{R}}
\newcommand{\Q}{\mathbb{Q}}
\newcommand{\ip}[2]{\langle #1,#2\rangle}

\title{Strict non-tightness of the level-3 quantum bootstrap\\
for a natural three-dimensional Hamiltonian, at an excited level}
\author{Daniel Keren\\ {\small Department of Computer Science, University of Haifa}
\and Ye Zhou (\begin{CJK*}{UTF8}{gbsn}周烨\end{CJK*})\orcidlink{0009-0004-7050-7736}\\ {\small Independent researcher}}
\date{}

\begin{document}
\maketitle

\begin{abstract}
For the coupled three-dimensional double well
$H=\tfrac12|p|^2+2\sum_i(q_i^2-1)^2-\sum_{i<j}q_i^2q_j^2$ we exhibit, at the first excited
energy level (and also at the ground state), a linear functional on polynomials of degree
$\le 6$ that satisfies every constraint of the standard level-3 eigenstate bootstrap, has a
\emph{strictly} positive definite moment matrix, and nevertheless assigns a negative value to
a nonnegative polynomial. It is therefore an interior point of the level-3 relaxation that is
realized by no quantum state. 
The construction reflects a fact that holds at every level: the eigenstate rows never reach the
top-degree momentum moments. At level 3, where strictly feasible points exist, this means that the
corresponding block of the moment matrix carries no information about which energies admit one. Every algebraic ingredient is an
exact rational certificate, and the only numerical ingredients (spectral enclosures of an
eigenfunction) are carried out in rigorous ball arithmetic.
\end{abstract}

\section{Introduction}

The quantum-mechanical bootstrap of Han, Hartnoll and Kruthoff~\cite{HHK} replaces the
eigenvalue problem $H\psi=E\psi$ by conditions on the expectation values
$\langle O\rangle=\ip{\psi}{O\psi}$. These are the linear relations $\langle[H,O]\rangle=0$ and
$\langle O(H-E)\rangle=\langle(H-E)O\rangle=0$, together with positive semidefiniteness of a finite moment
matrix built from operators up to a size cutoff. Increasing the cutoff level gives a hierarchy of more stringent
relaxations. The bootstrap has been applied to a variety of one- and few-body systems
(e.g.~\cite{BH1,BH2,BDDJK}), and in practice the allowed region shrinks rapidly as the
cutoff level grows. However, what a fixed level guarantees is less clear. A point of the level-$K$ feasible
set satisfies every available relation at that level, but it need not be the moment vector of
any state. If such spurious points exist only on the boundary of the feasible set, where the
moment matrix is singular, they might be dismissed as artifacts of truncation. This note
shows otherwise: for a natural polynomial Hamiltonian, at an excited
eigenstate, the level-3 feasible set contains a point with strictly positive definite moment
matrix that is realized by no state.
Appendix~\ref{sec:corner} shows that the mechanism is
structural: at every level, the eigenstate rows are blind to the top-degree momentum block of the moment
matrix, and at levels up to 3 this block carries no information about which energies admit a strictly
feasible point.

\paragraph{Related work.}
Relaxations of this kind are moment hierarchies of Lasserre type~\cite{Lasserre}. Their noncommutative
analogues converge under an Archimedean condition~\cite{PNA}, which fails for the unbounded operators $q$
and $p$. Convergence of the quantum-mechanical bootstrap was posed as an open
problem in~\cite{LM}; spurious allowed regions have been observed numerically~\cite{BDDJK}. The obstruction
used here is the classical gap between nonnegative polynomials and sums of squares~\cite{Hilbert,Blekherman}.
A spurious point of the weaker stationary relaxation, which omits the eigenstate rows, was given
in~\cite{K26}. The special role of the momentum moments has also been noted for the Kronig--Penney
model~\cite{BBB22}. In a companion paper, Zhou and Keren~\cite{ZK26} prove that bounds on a fixed observable
remain nonexact at every finite order for a quasi-exactly solvable sextic oscillator, even under finite
polynomial strengthenings. They also prove convergence for polynomial Schr\"odinger operators under
explicit sum-of-squares confinement and derivative hypotheses. The present paper gives a coupled
three-dimensional example with a strictly positive definite spurious moment matrix at level~3.

\paragraph{The role of numerics.}
Everything in the construction is exact except the moment data of an eigenfunction, which has no closed
form. Two of its properties are needed quantitatively: that its moment matrix is strictly positive
definite, and an upper bound on its value on a fixed nonnegative polynomial, the \emph{witness} $f$ of
Section~\ref{sec:construction}. They are proved in Section~\ref{sec:proofB} by a chain of standard
validated-numerics tools: a certified lower bound on the next eigenvalue in the symmetry sector (a separable
comparison operator and a truncation argument in a Hermite basis); Temple's inequality~\cite{Temple}, which
encloses the energy and the angle between the true eigenfunction and a computed one; an explicit
relative-boundedness estimate (Lemma~\ref{lem:cube}), which turns control of the energy residual into control
of the error in every moment entering the moment matrix, derivatives included; and ball
arithmetic~\cite{Arb} for every computation with the approximate eigenfunction, so that every number used is
a guaranteed enclosure. The witness $f$ and the functional $\ell$ that pairs with it are found by numerical
optimization, but the optimization plays no role in the proof: $f$, $\ell$ and all algebraic certificates are
exact rational objects, verified in exact arithmetic. Section~\ref{sec:repro} lists the accompanying code and certificate files, and describes independent
recomputations of the key quantities.

\section{Setting}

\paragraph{Weyl algebra.}
Let $\A$ be the Weyl algebra generated by $q_1,q_2,q_3,p_1,p_2,p_3$ with
$[q_j,p_k]=i\delta_{jk}$, realized on $L^2(\R^3)$ by $q_j=x_j$, $p_j=-i\partial_j$.
Every element has a unique standard-ordered expansion $\sum c_{\alpha\beta}\,q^\alpha p^\beta$
(all $q$'s to the left); $\A_{\le d}$ denotes the span of the monomials $q^\alpha p^\beta$
with $|\alpha|+|\beta|\le d$. For $A\in\A_{\le 6}$ we write $\pi_6(A)$ for the homogeneous
polynomial $\sum_{|\beta|=6}c_{0\beta}\,p^\beta$ formed by the pure-momentum monomials of top
degree in the standard-ordered expansion.

\paragraph{Hamiltonian.}
\begin{equation}\label{eq:H}
H=\tfrac12|p|^2+V(q),\qquad V(q)=2\sum_{i=1}^3(q_i^2-1)^2-\sum_{i<j}q_i^2q_j^2 .
\end{equation}
$V$ is bounded from below, $H$ has compact resolvent and purely discrete spectrum, and its
eigenfunctions are Schwartz functions. $H$ commutes with the reflections $q_i\mapsto -q_i$
and with permutations of the axes.
The relaxation replaces the expectation values $\langle O\rangle$ by the values $L(O)$ of a linear
functional $L$, required only to satisfy the conditions that expectation values in an eigenstate would.
\begin{definition}[Level-$K$ eigenstate bootstrap]\label{def:boot}
Fix $E\in\R$. A linear functional $L:\A_{\le 2K}\to\mathbb{C}$ is \emph{feasible at level $K$}
if
\begin{enumerate}
\item $L(1)=1$ and $L(A^\dagger)=\overline{L(A)}$ (Hermiticity);
\item the moment matrix $M_K(L)=\big(L(m_a^\dagger m_b)\big)_{a,b}$, indexed by the
monomials $m_a=q^\alpha p^\beta$ with $|\alpha|+|\beta|\le K$, is positive semidefinite;
\item (eigenstate rows) $L\big(O(H-E)\big)=L\big((H-E)O\big)=0$ for every monomial $O$ with
$\deg O\le 2K-4$;
\item (stationarity rows) $L([H,O])=0$ for every monomial $O$ with $[H,O]\in\A_{\le 2K}$.
\end{enumerate}
\end{definition}

This is the standard form of the constraints~\cite{HHK,BH1,BDDJK}: every row is generated
by a single monomial $O$ of bounded degree. Section~\ref{sec:discussion} discusses the larger
set of algebraic consequences obtained from linear combinations of generators of higher
degree that cancel back into $\A_{\le 2K}$. Throughout, interior is understood relative to the affine
space defined by normalization, Hermiticity and the linear bootstrap relations.

For $K=3$ the moment matrix is $84\times 84$. If $H\psi=E\psi$ and $\|\psi\|=1$, the
\emph{base functional} $B(A)=\ip{\psi}{A\psi}$ is feasible at every level: it is the moment data
the bootstrap is meant to recover. Theorem~\ref{thm:main} departs
from such a $B$ along a
direction invisible to the constraints. A feasible $L$ is \emph{spurious} if it is not of the form $A\mapsto\operatorname{tr}(\rho A)$ for any
density operator $\rho$ (positive, trace class, $\operatorname{tr}\rho=1$) for which these traces are defined. A
sufficient condition is $L(f)<0$ for some $f$ that is nonnegative as an operator; such an $f$ is
called a \emph{witness} (to the spuriousness of $L$).

\section{Main result}

\begin{theorem}\label{thm:main}
Let $E_1$ be the first excited eigenvalue of \eqref{eq:H}. It has multiplicity  three
(Proposition~\ref{prop:spec}) and
\[
E_1\in[\,1.61488137077986920\pm 2\cdot10^{-17}\,],\qquad
E_0\le 1.53482611032904480 .
\]
Let $\psi$ be the normalized eigenfunction of $E_1$ that is odd in $q_1$, even in $q_2,q_3$
and symmetric under $q_2\leftrightarrow q_3$, and let $B(A)=\ip{\psi}{A\psi}$. Let $\Delta$,
$\ell$ and $f$ be as in Section~\ref{sec:construction}. Then for every $\tau>10683.13$ the
functional
\[
L_\tau=B+\tau\Delta
\]
is feasible at level $K=3$ with $E=E_1$ (Definition~\ref{def:boot}), its moment matrix is
strictly positive definite,
\[
\lambda_{\min}\big(M_3(L_\tau)\big)\ \ge\ 1.45\cdot10^{-5},
\]
and $L_\tau(f)<0$ for the polynomial $f$ of Section~\ref{sec:construction}, which satisfies
$f(\xi)\ge 0$ for all $\xi\in\R^3$. Hence $L_\tau$ is an interior spurious point of the level-3
eigenstate relaxation.

The same statement holds for the ground state $\psi_0$ (energy $E_0$, fully symmetric sector)
with $\tau>10079.54$ and $\lambda_{\min}(M_3(L_\tau))\ge 1.41\cdot10^{-5}$.
\end{theorem}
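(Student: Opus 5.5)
The plan is to split the theorem into an algebraic part, which is exact, and an analytic part about $B$, which uses the validated numerics of Section~\ref{sec:proofB}.

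\textbf{Algebraic part.} First I will take $\Delta$ to be a functional supported only on the top-degree pure-momentum moments, that is, $\Delta(A)=\ell(\pi_6(A))$, where $\ell$ is a linear functional on homogeneous sextic polynomials in $p$. Hermiticity holds because $\ell$ is real and pure-momentum monomials are self-adjoint. Normalization is preserved because $\Delta(1)=0$.

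The key structural observation is that no constraint row reaches $\pi_6$.
\begin{itemize}
\item For an eigenstate row, $\deg O\le 2$, so $O(H-E)$ and $(H-E)O$ have degree at most $6$. Their top pure-$p$ part is $\tfrac12|p|^2\pi_4(O)$, where $\pi_4(O)$ is a pure-momentum part of degree $4$. Since $\deg O\le 2$, this part vanishes.
\item For a stationarity row, $[\tfrac12|p|^2,O]$ lowers the $p$-degree and adds a $q$-derivative. So it never produces pure-$p$ terms of degree $\deg O+1$ beyond what $[V,O]$ gives. Those terms always carry a factor of $q$ after standard ordering, or have degree less than $6$.
\end{itemize}
I will verify this monomial by monomial in exact arithmetic. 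It follows that $\Delta$ annihilates every linear row, so $L_\tau$ satisfies conditions 1, 3 and 4 of Definition~\ref{def:boot} for every $\tau$, because $B$ does.

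The matrix $M_3(\Delta)$ is nonzero only in the block indexed by the degree-3 momentum monomials $p^\beta$ with $|\beta|=3$. On that block it is the Hankel-type matrix $(\ell(p^{\beta+\gamma}))_{\beta,\gamma}$. I will choose $\ell$ so that this $10\times10$ block is positive semidefinite, with a rational sum-of-squares-style certificate, e.g.\ an exact $LDL^\top$ factorization. I will also choose $\ell$ so that $\ell(f)<0$ for a nonnegative sextic form $f$ that is not a sum of squares, such as a Robinson/Motzkin-type form in $p$. Nonnegativity of $f$ as a polynomial gives $\hat f\ge0$ as an operator, since $f(p)$ is a Fourier multiplier. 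I expect the paper's witness to be of this kind, with nonnegativity certified exactly, e.g.\ by a rational SOS certificate after multiplying by $|p|^2$.

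With these choices, $M_3(L_\tau)=M_3(B)+\tau M_3(\Delta)$ has $\lambda_{\min}\ge\lambda_{\min}(M_3(B))$, which is the $1.45\cdot10^{-5}$ bound. The witness value is $L_\tau(f)=B(f)+\tau\,\ell(f)$. This is negative once $\tau>B(f)/|\ell(f)|$, and an upper bound on $B(f)$ gives the threshold $10683.13$.

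\textbf{Analytic part.} This is the main obstacle: certifying $\lambda_{\min}(M_3(B))$ and $B(f)$ for an eigenfunction with no closed form. I will follow Section~\ref{sec:proofB}.
\begin{enumerate}
\item Compute a Hermite-basis approximation $\tilde\psi$ in the given symmetry sector.
\item Get a rigorous lower bound on the next eigenvalue in that sector, using a separable comparison operator together with truncation. Proposition~\ref{prop:spec} supplies the multiplicity and sector information.
\item Apply Temple's inequality to enclose $E_1$ and to bound $\|\psi-\tilde\psi\|$ through the residual.
\item Use Lemma~\ref{lem:cube} to convert the residual bound into bounds on $|B(A)-\ip{\tilde\psi}{A\tilde\psi}|$ for all $A$ of degree $\le 6$, including momenta.
\item Evaluate the $84\times84$ matrix and $f$ in ball arithmetic, and apply Weyl's perturbation inequality to the entrywise error (via the Frobenius norm) to certify the eigenvalue bound.
\end{enumerate}

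The ground-state statement repeats the same steps in the fully symmetric sector, with $E_0$ and its own constants.
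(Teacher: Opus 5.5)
The algebraic half of your plan is the paper's proof: $L_\tau=B+\tau\Delta$, rows invisible to $\Delta$, $M_3(\Delta)=G_\ell\oplus0$, and the threshold $B(f)/|\ell(f)|$. Your final numerical step, however, would not go through.

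You propose to bound the entrywise errors $|B(m_a^\dagger m_b)-\tilde B(m_a^\dagger m_b)|$ and apply Weyl's inequality to the $84\times84$ moment matrix itself. Lemma~\ref{lem:cube} only gives control of $\varepsilon_X^2=\sum_{|m|\le3}\|m\delta\|^2$. Equivalently, $\|X-\tilde X\|_F\le\varepsilon_X$ for the maps $X,\tilde X:a\mapsto\sum_m a_m\,m\psi$ (resp.\ $m\tilde\psi$). From that information the best possible bound is $\|M_3(B)-M_3(\tilde B)\|\le 2\varepsilon_X\|\tilde X\|+\varepsilon_X^2$, and a rank-one perturbation along the top singular direction of $\tilde X$ attains it. Now $\|\tilde X\|^2\ge M_3(\tilde B)_{11}=\|\tilde\psi\|^2=1$, and in fact it is much larger because of entries such as $\|p_1^3\tilde\psi\|^2$. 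The residuals actually yield $\varepsilon_X\approx2.4\cdot10^{-5}$, so this perturbation bound exceeds $\lambda_{\min}(M_3(\tilde B))\approx1.47\cdot10^{-5}$. Weyl applied to $M_3$ therefore does not even certify positive definiteness, let alone $1.45\cdot10^{-5}$.

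The missing idea is to perturb the factor rather than the Gram matrix. Since $M_3(B)=X^\dagger X$,
$\lambda_{\min}(M_3(B))=\sigma_{\min}(X)^2\ge(\sigma_{\min}(\tilde X)-\|X-\tilde X\|_F)^2\ge(\sqrt t-\varepsilon_X)^2$.
This only needs $\varepsilon_X\ll\sqrt t\approx3.8\cdot10^{-3}$, which holds with ratio about $6\cdot10^{-3}$. Also, feeding Lemma~\ref{lem:cube} requires $\langle\varphi,T^3\varphi\rangle\le\kappa^3r_1r_2$. That uses the second residual $\|(H-E_1)^2\tilde\psi\|$, not only the Temple residual.

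A smaller point: your justification for the stationarity rows is wrong as stated, although the exact monomial-by-monomial check you propose would still pass. The kinetic commutator does not lower the $p$-degree; it trades a $q_j$ for a $p_j$. It can therefore create top-degree pure-momentum terms, e.g.\ $[\tfrac12|p|^2,q_1p_1^5]=-i\,p_1^6$, which $\Delta$ does see. What excludes such $O$ is admissibility. For $O=q^\alpha p^\beta$ with $\beta\ne0$, $[V,O]$ contains the non-cancelling terms $i\beta_i\,q^\alpha(\partial_iV)_3\,p^{\beta-e_i}$ of degree $\deg O+2$. So $[H,O]\in\A_{\le6}$ forces $\deg O\le4$, and then every pure-momentum term has degree at most $4$. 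For pure-position $O$, the pure-momentum terms of $[H,O]$ have degree at most $1$.
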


The next proposition identifies the level: it shows that $E_1$ is the first excited eigenvalue and has
multiplicity three, so that $\psi$ is well defined. Its proof uses the tools of Section~\ref{sec:proofB}
and is given at the end of that section.
\begin{proposition}\label{prop:spec}
The eigenvalues of \eqref{eq:H} below $1.9720$ are exactly the following, counted with
multiplicity:
\[
\begin{array}{lll}
E_0\in[1.53482611032697,\,1.53482611032905] & \text{mult.\ }1 & \text{parity class }(+,+,+)\\
E_1\in[1.61488137077792,\,1.61488137077988] & \text{mult.\ }3 & \text{classes }(-,+,+),(+,-,+),(+,+,-)\\
E_2\in[1.68248592083046,\,1.68248592083215] & \text{mult.\ }3 & \text{classes }(-,-,+),(-,+,-),(+,-,-)\\
E_3\in[1.74146650963446,\,1.74146650963490] & \text{mult.\ }1 & \text{class }(-,-,-)
\end{array}
\]
Here the class records the parity of the eigenfunction under $q_i\mapsto-q_i$. These eight
levels form the tunneling multiplet of the eight wells at $(\pm\sqrt2,\pm\sqrt2,\pm\sqrt2)$,
the global minima of $V$ (where $V=-6$).
\end{proposition}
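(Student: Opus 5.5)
The plan is to decompose $L^2(\R^3)$ into the eight parity sectors $\mathcal H_{(s_1,s_2,s_3)}$, $s_i\in\{\pm\}$, which $H$ leaves invariant since it commutes with each reflection $q_i\mapsto-q_i$. In each sector two facts are needed:
\begin{enumerate}
\item $H$ has exactly one eigenvalue below $1.9720$;
\item that eigenvalue lies in the stated interval.
\end{enumerate}
Permutations of the axes map $\mathcal H_{(-,+,+)}$ unitarily onto $\mathcal H_{(+,-,+)}$ and $\mathcal H_{(+,+,-)}$, intertwining $H$, and similarly for the three two-minus classes. So only four sectors need to be treated: $(+,+,+)$, $(-,+,+)$, $(-,-,+)$ and $(-,-,-)$. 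The multiplicities $1,3,3,1$ and the bijection with the eight classes then follow immediately, as does the completeness statement (``exactly these'').

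\emph{Upper bounds and enclosures.} In each of the four representative sectors I would run a Rayleigh--Ritz computation in a tensor Hermite basis restricted to the appropriate parities. The matrix entries of $H$ are exact rationals, apart from a possible scaling parameter chosen rational. This gives a trial vector $\phi$ and a Rayleigh quotient $\rho=\ip{\phi}{H\phi}/\|\phi\|^2$, which is an upper bound for the sector's lowest eigenvalue. I would also compute the residual $\eta^2=\|(H-\rho)\phi\|^2/\|\phi\|^2$ in ball arithmetic; this is a finite sum, since $H\phi$ is again a finite Hermite expansion. Given a certified lower bound $\nu>\rho$ for the second eigenvalue of the sector, Temple's inequality gives $\lambda_1\ge \rho-\eta^2/(\nu-\rho)$. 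Together with $\lambda_1\le\rho$, this yields the stated intervals, whose widths are of order $10^{-12}$ and are consistent with $\eta^2$ small and $\nu-\rho=O(0.3)$.

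\emph{Lower bound on the second eigenvalue (the main obstacle).} Both the Temple step and the completeness claim need $\nu\ge1.9720$ for the second eigenvalue in each sector. I would first try a separable comparison operator. By $q_i^2q_j^2\le\tfrac12(q_i^4+q_j^4)$,
\[
V\ \ge\ \sum_i\big(q_i^4-4q_i^2+2\big),
\]
so $H\ge \sum_i h(q_i,p_i)$ with $h=\tfrac12p^2+q^4-4q^2+2$. The spectrum of the right side on each parity sector is the set of sums of one-dimensional even or odd eigenvalues of $h$. These can be certified rigorously by a one-dimensional Temple/Lehmann computation. Alone this bound is probably too weak: it loses the coupling and gives a lower bound well below $1.97$. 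So I would use it only as a tail estimate. Split the sector into the span $P$ of Hermite functions up to degree $N$ and its complement, and write $H\ge PHP\oplus (I-P)H_{\rm sep}(I-P)$ plus the off-diagonal coupling. The coupling is bounded by an explicit norm, since $V$ maps degree $\le N$ into degree $\le N+4$, so only a finite band of the complement is touched. With an Aronszajn/Schur-complement style inequality ($H\ge A\oplus D - \|C\|$ type bounds with a shift parameter), the second eigenvalue of $H$ is bounded below by the second eigenvalue of a finite, exactly computable matrix minus a controlled error. This holds provided $(I-P)H_{\rm sep}(I-P)\ge$ a threshold comfortably above $1.972$, which holds for large $N$ because the Hermite-degree-$>N$ functions have large harmonic-oscillator energy. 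I expect choosing $N$, the scaling of the Hermite basis, and the shift so that the certified bound clears $1.9720$ in all four sectors to be the delicate part. All matrix eigenvalue bounds would be done in ball arithmetic, e.g.\ by an $LDL^\top$ inertia count of the shifted matrix.

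\emph{Assembly.} With $\nu\ge1.9720$ in every sector, each sector has exactly one eigenvalue below $1.9720$, and Temple encloses it. Symmetry gives the multiplicities. The interpretation as the tunneling multiplet of the eight wells at $(\pm\sqrt2,\pm\sqrt2,\pm\sqrt2)$ is then a remark: $V=-6$ there, and the eight parity combinations correspond to the eight sign patterns of localized well states.
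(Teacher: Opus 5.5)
Your proposal is correct and shares the paper's architecture: reduction to four parity classes by axis permutations, Hermite Ritz vectors with exactly computable residuals, Temple enclosures, and the same comparison operator. Your bound $q_i^2q_j^2\le\tfrac12(q_i^4+q_j^4)$ is exactly the paper's identity $H-H_{\rm sep}=\tfrac12\sum_{i<j}(q_i^2-q_j^2)^2\ge0$. You route the gap step differently only because you expect the separable bound to fall short, and that expectation is wrong.

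By min-max in each class, $\lambda_2(H|_{\rm class})\ge\sum_i a_0(\sigma_i)+\min_i\big(a_1(\sigma_i)-a_0(\sigma_i)\big)$, where $a_k$ are the two lowest even or odd levels of $\tilde h$. The certified values $e_0\ge-0.1991865064$, $e_1\ge2.3704667252$, $o_0\ge-0.1034946187$, $o_1\ge3.5733502037$ give $1.97209$, $2.06778$, $2.16347$ and $3.36636$ for the four classes. The separable operator is indeed far off for the lowest level ($3e_0\approx-0.60$ versus $E_0\approx1.53$). But the one-dimensional gap $e_1-e_0\approx2.57$ keeps its second level above the whole multiplet, and the threshold $1.9720$ in the statement is precisely the $(+,+,+)$ value $2e_0+e_1$.

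For the gap, the paper therefore needs only four one-dimensional numbers, and it gets them without Temple or Lehmann. On the complement $Q$ of the first $20$ Hermite functions of a given parity it replaces $Q\tilde hQ$ by its oscillator lower bound $\alpha Q$. This reduces the low spectrum to a $22\times22$ matrix, certified by an $LDL^\top$ inertia count.

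Your three-dimensional truncation, with a Schur-complement treatment of the band coupling, is in effect the same trick done in 3D, with $QHQ\ge QH_{\rm sep}Q$ supplying the tail. It is legitimate and would give a much sharper gap (numerically the next levels lie above $3.8$), hence a larger admissible threshold and more slack in Temple. The price is a large certified 3D computation in which the quartic coupling, with entries growing like $N^2$, must be controlled against a tail bound growing only linearly in $N$. That is the delicate tuning you anticipate, and it is unnecessary for the proposition as stated.

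A minor point: orthonormal Hermite matrix entries involve $\sqrt n$ and are not rational. This is harmless, since everything is enclosed in ball arithmetic anyway.
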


The first excited level is threefold degenerate. The eigenfunction $\psi$ of Theorem~\ref{thm:main} is one
real eigenfunction in it, of definite parity; its images under permutations of the axes are covered by the
same argument after relabeling.

We point out three features of the result. First, the Hamiltonian is a polynomial,
cubic-symmetric coupled double well. The construction uses neither quasi-exact solvability nor
auxiliary degrees of freedom. Second, the constraint rows are imposed at the true eigenvalue, for which no closed form is
available and which is known only through a rigorous enclosure; this is possible because the direction
added to $B$ (Section~\ref{sec:construction}) satisfies the 
constraints for every $E$. Third,
the moment matrix is strictly positive definite, so the point lies in the interior of the feasible set.
At every level $K\ge4$, the coefficient vector of the degree-4 polynomial $H-E$ lies in the kernel of
$M_K$ for every feasible functional (Appendix~\ref{rem:higher}).

The proof consists of two tiers. Lemmas~\ref{lem:witness}--\ref{lem:MDelta} of Section~\ref{sec:construction} are
exact and use no numerical analysis; together with the mere existence of an eigenstate, they already show
that the level-3 eigenstate relaxation is not tight, at every eigenvalue (Appendix~\ref{rem:higher}). The
numerical analysis of Section~\ref{sec:proofB} is needed only for the two quantitative claims:
strict positive definiteness of $M_3(L_\tau)$, and the explicit threshold for $\tau$.

\section{The construction}\label{sec:construction}

\paragraph{The witness $f$ and the functional $\ell$.}
The construction rests on a pair of objects in the momenta alone: a nonnegative sextic $f$ that is not
a sum of squares, and a linear functional $\ell$ that is nonnegative on every sum of
squares but negative on $f$. The pair certifies the gap between nonnegative polynomials and sums of squares in the one place the
constraint rows cannot see (Lemma~\ref{lem:rows}). Both $f,\ell$ are exact rational objects, listed in
Table~\ref{tab:fl}. The witness is a
homogeneous sextic in the momenta that depends only on their squares: $f(p)=\sum_e f_e\,p^e$, with $e$
ranging over the ten exponent vectors with even entries summing to~$6$ and
$p^e=p_1^{e_1}p_2^{e_2}p_3^{e_3}$. It is the rational rounding of the cubic found by optimization
(see below), evaluated at $(p_1^2,p_2^2,p_3^2)$, plus $|p|^6/270000$, which makes the certificate of
Lemma~\ref{lem:witness}(1) strictly interior. The functional $\ell$ is defined on degree-6 momentum
monomials, with $\ell(p^e)=0$ unless every entry of $e$ is even; its values on the ten even monomials
are also listed in Table~\ref{tab:fl}.

\begin{table}[htb]\centering
\renewcommand{\arraystretch}{1.15}
\begin{tabular}{ccc}\toprule
$e$ & $f_e$ & $\ell(p^e)$\\\midrule
$(6,0,0)$ & $119045273/1350000000$ & $6838869/50000000$\\
$(0,6,0)$ & $119072651/1350000000$ & $683773/5000000$\\
$(0,0,6)$ & $11906231/135000000$ & $2735249/20000000$\\
$(4,2,0)$ & $-149982839/900000000$ & $992129/10000000$\\
$(0,4,2)$ & $-18750187/112500000$ & $2480133/25000000$\\
$(2,0,4)$ & $-18748207/112500000$ & $4960551/50000000$\\
$(2,4,0)$ & $1/90000$ & $837787/10000000$\\
$(0,2,4)$ & $1/90000$ & $8377993/100000000$\\
$(4,0,2)$ & $1/90000$ & $523663/6250000$\\
$(2,2,2)$ & $52977461/225000000$ & $2036581/50000000$\\\bottomrule
\end{tabular}
\caption{The witness $f(p)=\sum_e f_e\,p^e$ and the functional $\ell$ on the even sextic
momentum monomials ($\ell(p^e)=0$ for all other $e$).}\label{tab:fl}
\end{table}

The proofs of Lemmas~\ref{lem:witness}--\ref{lem:MDelta} are computations in exact rational arithmetic:
every identity is checked coefficient by coefficient, and every positive-definiteness claim by an exact
$LDL^\top$ factorization, so no floating-point number enters. Each was verified by two independent
implementations (Section~\ref{sec:repro}).

\begin{lemma} \label{lem:witness}
\begin{enumerate}
\item $|p|^2f(p)=v(p)^{\!\top}G\,v(p)$, where $v$ is the vector of the 15 degree-4 monomials in
$p$ and $G$ is a rational symmetric positive definite matrix (proved via exact $LDL^\top$ decomposition;
Appendix~\ref{app:G}). Hence $f(\xi)>0$ for all $\xi\in\R^3\setminus\{0\}$. Since the momenta
commute, $f(p)$ acts in the momentum representation as multiplication by $f(\xi)$, so $f(p)$ is a
nonnegative operator.
\item The Gram matrix $\big(\ell(p^{\beta+\beta'})\big)$ over the ten degree-3 momentum
monomials is positive definite (exact $LDL^\top$; smallest pivot $3.37\cdot10^{-3}$).
\item $\ell(f)=-11470021230637/(3\cdot10^{15})\approx-3.8233\cdot10^{-3}$.
\end{enumerate}
\end{lemma}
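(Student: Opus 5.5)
The lemma consists of three finite, exact claims about explicit rational data. The plan is to verify each one by direct computation in exact rational arithmetic, with no floating point anywhere.

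For part (1), first form the polynomial $|p|^2f(p)$. It is an even octic in $p_1,p_2,p_3$, with 15 nonzero coefficients indexed by the even exponent vectors summing to~$8$. Next write down the rational symmetric $15\times15$ matrix $G$ of Appendix~\ref{app:G}. Then check the identity $v(p)^{\top}Gv(p)=|p|^2f(p)$ coefficient by coefficient. For each degree-8 monomial $p^\gamma$, this means summing $G_{ab}$ over the pairs of degree-4 exponents with $\alpha_a+\alpha_b=\gamma$ and comparing the sum with the corresponding coefficient of $|p|^2f$. Positive definiteness of $G$ is then certified by an exact $LDL^\top$ factorization, running Gaussian elimination over $\Q$ and checking that every pivot is strictly positive. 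Since $G\succ0$, we get $v(\xi)^\top Gv(\xi)\ge\lambda_{\min}(G)|v(\xi)|^2>0$ whenever $\xi\ne0$, because $v(\xi)$ contains $\xi_1^4,\xi_2^4,\xi_3^4$ and so is nonzero. Dividing by $|\xi|^2>0$ gives $f(\xi)>0$. Operator nonnegativity follows from the Fourier transform: it diagonalizes all $p_j$ at once, so $f(p)$ acts as multiplication by $f(\xi)\ge0$ on its natural domain, and hence $\ip{\phi}{f(p)\phi}=\int f(\xi)|\hat\phi(\xi)|^2\,d\xi\ge0$.

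For part (2), assemble the $10\times10$ moment matrix $\big(\ell(p^{\beta+\beta'})\big)$ over the degree-3 monomials, using the convention that $\ell(p^e)=0$ unless every entry of $e$ is even. This convention gives the matrix a block structure according to the parity class of $\beta$ modulo 2, so it splits into small blocks. Each block is factored exactly by $LDL^\top$, and we record the positive pivots; the smallest should be about $3.37\cdot10^{-3}$. Since $\ell(s^2)=c^\top(\ell(p^{\beta+\beta'}))c$ for $s=\sum c_\beta p^\beta$, this shows that $\ell$ is positive on sums of squares of cubics. That is the only sense needed.

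For part (3), compute $\ell(f)=\sum_e f_e\,\ell(p^e)$ over the ten rows of Table~\ref{tab:fl} as an exact rational sum, bring it to the common denominator $3\cdot10^{15}$, and compare the numerator with $-11470021230637$.

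None of these steps is conceptually hard. The main obstacle is obtaining the certificate $G$ in part (1) at all, because the Gram representation of $|p|^2f$ is far from unique. I would first find $G$ numerically with an SDP solver, maximizing $\lambda_{\min}$ subject to the linear coefficient constraints. I would then round it to rationals and project it exactly back onto the affine space of Gram matrices of $|p|^2f$, solving the linear constraints over $\Q$ with a fixed choice of free entries. The margin from the added term $|p|^6/270000$ is what should keep the projected $G$ strictly positive definite. That last point is exactly what the exact $LDL^\top$ check confirms; if it failed, I would enlarge that margin.
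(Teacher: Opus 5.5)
Your proposal is correct and follows the paper's own route. The identity $|p|^2f=v^\top Gv$ is checked coefficient by coefficient against the rational $G$ of Appendix~\ref{app:G}. Positive definiteness of $G$ and of $G_\ell$ is certified by exact $LDL^\top$ over $\Q$, and $\ell(f)$ is computed as an exact rational sum. Operator nonnegativity of $f(p)$ is then read off in the momentum representation.

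Splitting $G_\ell$ by parity into blocks of sizes $3,3,3,1$ is a harmless convenience. Only the signs of the pivots are independent of the elimination order, so the quoted value $3.37\cdot10^{-3}$ refers to one particular order. Your account of how $G$ would be found (SDP, rounding, exact projection, margin from $|p|^6/270000$) matches the paper's description. It is not needed for the proof, because $G$ is given explicitly.
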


\paragraph{The ``tail functional'' $\Delta$.}
\begin{equation}\label{eq:Delta}
\Delta(A)=\ell\big(\pi_6(A)\big),\qquad A\in\A_{\le 6}.
\end{equation}
Plainly put, $\Delta$ reads only the coefficients of the top-degree, pure-momentum monomials
$p^\beta$ ($|\beta|=6$) of $A$ and applies $\ell$ to them. It vanishes on $\A_{\le 5}$; in
particular $\Delta(1)=0$. On a pure-momentum sextic such as $f$ it coincides with $\ell$,
$\Delta(f)=\ell(f)$.

\begin{lemma} \label{lem:rows}
$\Delta$ satisfies Hermiticity and annihilates every eigenstate row and every stationarity row
of Definition~\ref{def:boot} at $K=3$, for every $E\in\R$. Concretely, for all 28 monomials
$O$ of degree $\le 2$ we have $\Delta(OH)=\Delta(HO)=\Delta(O)=0$. For all 259 monomials
$O$ with $[H,O]\in\A_{\le6}$ we have $\Delta([H,O])=0$; these are the pure-position monomials of
degree $\le6$ and all other monomials of degree $\le4$ (for any other monomial $O$ the
standard-ordered expansion of $[H,O]$ has degree $\deg O+2$, or $\deg O$ if $O$ is
pure-position). For all 924
monomials $A$ of degree $\le 6$ we have $\Delta(A^\dagger)=\overline{\Delta(A)}$.
\end{lemma}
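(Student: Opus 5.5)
The plan is a degree count in the standard-ordered expansion. $\Delta$ only reads top-degree pure-momentum coefficients, so it suffices to show that no constraint row, once normally ordered, contains a monomial $p^\beta$ with $|\beta|=6$. Hermiticity is the exception; there the relevant monomials do appear, and we check a reality condition instead.

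\emph{Eigenstate rows and normalization.} Take a monomial $O=q^\alpha p^\beta$ with $|\alpha|+|\beta|\le 2$. Then $O$ itself lies in $\A_{\le 2}\subset\A_{\le5}$, so $\Delta(O)=0$, and by linearity $\Delta(O(H-E))=\Delta(OH)-E\Delta(O)$. It therefore remains to show $\Delta(OH)=\Delta(HO)=0$.

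Write $H=\tfrac12|p|^2+V(q)$. The part $OV$ or $VO$, after reordering with $[q,p]=i$, has every term of degree at most $|\alpha|+|\beta|+4\le 6$. Each top-degree term contains at least four $q$'s, because reordering only lowers the degree by removing a $q,p$ pair, which cannot produce a degree-6 term with no $q$. Hence that part contributes nothing to $\pi_6$.

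The kinetic part $\tfrac12 O|p|^2$ or $\tfrac12|p|^2O$ has degree at most $4<6$. So $\pi_6(OH)=\pi_6(HO)=0$. The general rule behind this is that $\pi_6(XY)$ can be nonzero only if both factors have pure-momentum components whose degrees sum to 6.

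\emph{Stationarity rows.} First take $O$ pure-position of degree $\le6$. Then $[V,O]=0$, and $[\tfrac12|p|^2,O]=-i\sum_j(\partial_jO)p_j-\tfrac12\Delta_q O$ (Laplacian). Every monomial in this expression contains a $q$ factor unless $\deg O\le1$, and in that case the degree is at most 1. Either way there is no $p^\beta$ with $|\beta|=6$.

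Next take any other monomial $O$ of degree $\le4$. The commutator $[\tfrac12|p|^2,O]$ has degree $\le \deg O+1\le5$. The commutator $[V,O]$ reduces degree by at least one relative to $VO$, so its degree is at most $\deg O+3\le 7$. We need more than a degree bound here, so we use the normal-ordering formula
\[
[V(q),p^\beta]=\sum_{\gamma\neq0}c_\gamma\,(\partial^\gamma V)(q)\,p^{\beta-\gamma}.
\]
Every term of $[V,O]$ therefore carries a derivative of $V$. A derivative $\partial^\gamma V$ with $|\gamma|=4$ is a constant, and the corresponding term has degree $|\beta|-4\le 0$. Otherwise the term contains a positive power of $q$. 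So no pure-momentum sextic appears.

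This step is the one that needs care. The cleanest route is to use the leading-symbol (Poisson bracket) picture together with the explicit $\partial^\gamma V$ formula, which shows directly that every term of a commutator with $V$ contains a factor $q$ unless it has low degree.

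\emph{Hermiticity.} For a monomial $A=q^\alpha p^\beta$, the adjoint is $A^\dagger=p^\beta q^\alpha$. Reordering this to standard order adds only terms of lower degree. If $\alpha\neq0$, no pure-momentum term of degree 6 arises, so $\Delta(A^\dagger)=0=\overline{\Delta(A)}$. If $\alpha=0$, then $A^\dagger=A$, and $\Delta(A)=\ell(p^\beta)$ is real because $\ell$ is rational by Table~\ref{tab:fl}. Extending from monomials to all of $\A_{\le 6}$ uses conjugate-linearity, since $\pi_6(A^\dagger)=\overline{\pi_6(A)}$ coefficientwise.

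The counts 28, 259 and 924 are then just bookkeeping: the paper's exact computer check enumerates these monomials, while the argument above covers all of them uniformly.
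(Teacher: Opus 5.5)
Your degree count in the standard-ordered expansion is the same mechanism as the paper's structural argument, part~(1) of Proposition~\ref{prop:corner}. The paper also establishes the lemma by an exact coefficient-by-coefficient check. Your treatment of the eigenstate rows and of Hermiticity is fine.

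The gap is in the stationarity rows. You prove $\Delta([H,O])=0$ only for pure-position $O$ of degree $\le6$ and for other $O$ of degree $\le4$, and you take for granted that these are \emph{all} monomials with $[H,O]\in\A_{\le6}$. That completeness is part of the lemma (its parenthetical claim), and ``every stationarity row'' rests on it. Deferring it to the computer enumeration as bookkeeping does not supply the argument. Your method also does not extend to the excluded monomials on its own. For $O=q_1p_1^5$ one has $[\tfrac12|p|^2,O]=-ip_1^6$, and $\Delta(-ip_1^6)=-i\,\ell(p_1^6)\neq0$. Such an $O$ is harmless only because it is not a row.

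The missing step is non-cancellation. Take $O=q^\alpha p^\beta$ with $\beta\neq0$. The degree-$(\deg O+2)$ part of $[V,O]=q^\alpha[V,p^\beta]$ is $i\sum_i\beta_i\,q^\alpha(\partial_iV)_3\,p^{\beta-e_i}$, where $(\partial_iV)_3=8q_i^3-2q_i\sum_{j\neq i}q_j^2\neq0$ is the cubic part of $\partial_iV$. Distinct $i$ give distinct $p$-exponents, and $[\tfrac12|p|^2,O]$ has degree at most $\deg O$. So nothing cancels, and admissibility forces $\deg O\le4$. For pure-position $O$, the term $-i(\partial_jO)p_j$ shows that $[H,O]$ has degree exactly $\deg O$, which forces $\deg O\le6$.

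Two intermediate claims are also false as stated, although your conclusions survive.
\begin{itemize}
\item For $|\gamma|=2$, $\partial^\gamma V$ can have a constant term: the quadratic part $-4|q|^2$ of $V$ gives $\partial_i^2V$ the constant $-8$.
\item For a pure-position $O$ of degree $2$, the Laplacian term is a nonzero constant.
\end{itemize}
Both produce $q$-free terms, but of degree at most $|\beta|-2$ and $0$ respectively. The correct statement, which the paper uses, is that any $q$-free term of a row has degree at most $\deg O$.

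Your bounds $\deg O+3$ for $[V,O]$ and $\deg O+1$ for $[\tfrac12|p|^2,O]$ are also loose. Each reordering contraction removes a $q$--$p$ pair and lowers the degree by $2$, so the sharp bounds are $\deg O+2$ and $\deg O$. The looseness is harmless for the cases $\deg O\le4$, but the sharp count is exactly what the missing admissibility step needs.
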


The reason is structural. The only momentum dependence of $H$ is $\tfrac12|p|^2$, and $V$ is a
pure-position polynomial. A level-3 row therefore never produces a pure-momentum term of top
degree $6$ that $\ell$ could see. 

\begin{lemma}\label{lem:MDelta}
$M_3(\Delta)$ vanishes outside the $10\times10$ block whose rows and columns are indexed by the ten
pure-momentum monomials $p^\beta$, $|\beta|=3$; on that block it equals the positive definite Gram matrix
$G_\ell$ of Lemma~\ref{lem:witness}(2). In particular $M_3(\Delta)\succeq 0$.
\end{lemma}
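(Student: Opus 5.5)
The plan is to compute each entry $M_3(\Delta)_{ab}=\Delta(m_a^\dagger m_b)=\ell\big(\pi_6(m_a^\dagger m_b)\big)$ directly from the standard-ordered expansion. Here $m_a=q^\alpha p^\beta$ and $m_b=q^{\alpha'}p^{\beta'}$, with $|\alpha|+|\beta|\le3$ and $|\alpha'|+|\beta'|\le3$. Since $\Delta$ vanishes on $\A_{\le5}$, the first step is degree counting. The product $m_a^\dagger m_b=p^\beta q^{\alpha+\alpha'}p^{\beta'}$, up to the reordering $(q^\alpha)^\dagger=q^\alpha$ and $(p^\beta)^\dagger=p^\beta$, lies in $\A_{\le\deg m_a+\deg m_b}$. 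So the entry can be nonzero only if $\deg m_a=\deg m_b=3$.

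The second step is to show that a nonzero entry also forces $\alpha=\alpha'=0$. Reordering $p^\beta q^{\alpha+\alpha'}$ into standard order with the relation $[q_j,p_k]=i\delta_{jk}$ produces the leading term $q^{\alpha+\alpha'}p^{\beta+\beta'}$ plus lower-degree terms. Each commutator removes one $q$ and one $p$, so every correction term still contains a $q$ unless all $q$'s are consumed. Consuming all the $q$'s drops the degree by $2|\alpha+\alpha'|$. Consequently, the pure-momentum part of degree exactly $6$ in the standard-ordered expansion is nonzero only when $\alpha+\alpha'=0$.

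In that remaining case $m_a=p^\beta$ and $m_b=p^{\beta'}$ with $|\beta|=|\beta'|=3$. The momenta commute and are Hermitian, so $m_a^\dagger m_b=p^{\beta+\beta'}$, which is already standard-ordered. Hence $\Delta(m_a^\dagger m_b)=\ell(p^{\beta+\beta'})$, which is exactly the Gram matrix $G_\ell$ of Lemma~\ref{lem:witness}(2). The matrix $M_3(\Delta)$ is therefore zero except for this principal $10\times10$ block, which is positive definite by Lemma~\ref{lem:witness}(2). A symmetric matrix that vanishes outside a positive semidefinite principal block is positive semidefinite, which gives $M_3(\Delta)\succeq0$.

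The only step needing care is the second one: the claim that lower-order terms from reordering never contribute a degree-6 pure-momentum monomial. The cleanest route is to grade by total degree and note that each commutator lowers the degree by $2$. Degree $6$ can therefore be reached only by the leading term, and that term is pure-momentum only if $\alpha+\alpha'=0$.
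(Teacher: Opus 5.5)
Your proof is correct and is essentially the paper's argument: part (2) of Proposition~\ref{prop:corner} makes the same observation. There, $m_a^\dagger m_b$ has a pure-momentum term of top degree $2K$ only when both monomials are pure momentum of degree $K$, and then the product is $p^{\beta_a+\beta_b}$; your count (each commutation lowers the degree by $2$) spells out the reordering step the paper states in one line, and the main text also confirms the lemma by an exact coefficient-by-coefficient computation.
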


\paragraph{Proof of Theorem~\ref{thm:main}, given Proposition~\ref{prop:B}.}
$L_\tau(1)=B(1)=1$. Hermiticity and all rows hold for $B$, because $\psi$ is an eigenfunction, and
for $\Delta$ by Lemma~\ref{lem:rows}; since the conditions are linear, they hold for $L_\tau$. By
Lemma~\ref{lem:MDelta},
$M_3(L_\tau)=M_3(B)+\tau M_3(\Delta)\succeq M_3(B)\succ 0$, with the bound of
Proposition~\ref{prop:B}. Finally,
$L_\tau(f)=B(f)+\tau\,\ell(f)<0$ when  $\tau>B(f)/|\ell(f)|$. With the upper bound on $B(f)$
from Proposition~\ref{prop:B} this holds for $\tau>10683.13$, and for $\tau>10079.54$ in the
ground-state case. \qed

\subsection*{How $f$ and $\ell$ were found: the alternation method}

This subsection describes how $f$ and $\ell$ were obtained. The proof of Theorem~\ref{thm:main} does not
depend on this: it uses only the exact checks of Lemmas~\ref{lem:witness}--\ref{lem:MDelta}, which any pair
$(f,\ell)$ passing them would satisfy equally well. The method is needed to find such a pair in the first
place, at this and at higher levels (Appendix~\ref{rem:higher}), and to reproduce the certificates.

Write the witness as $f(p)=F(p_1^2,p_2^2,p_3^2)$ with $F$ a cubic form in $x=(x_1,x_2,x_3)$,
and view $\ell$ as a vector of values on the even sextic momentum monomials. The smallest value of $\ell(f)$ over admissible pairs (with $f$ nonnegative, $\ell$ nonnegative on
sums of squares of cubics, and both normalized) is the value of the bilinear problem
\begin{equation}\label{eq:bilinear}
\min_{\ell,\,F}\ \ell\big(F(p^2)\big)\quad\text{subject to}\quad
G_\ell\succeq 0,\ \operatorname{tr}G_\ell=1,\qquad F\ge0\ \text{on }\R^3_{\ge0},\ \|F\|_1\le1 .
\end{equation}
The first constraint says that $\ell$ is nonnegative on every sum of squares of cubics, since
$\ell(g^2)=c^\top G_\ell\,c\ge0$ for every cubic $g=\sum_\beta c_\beta p^\beta$ (that is, $\ell$ lies in the cone
dual to the sums of squares); the second says that $F$ is copositive, which makes $f=F(p_1^2,p_2^2,p_3^2)$
nonnegative everywhere. The objective is linear in each variable separately, and the two constraints pull
in opposite directions: a negative value $\ell(f)<0$ is possible only if $f$ is nonnegative but \emph{not}
a sum of squares of cubics. The value of~\eqref{eq:bilinear} measures how strongly a normalized $\ell$ can
detect such an $f$. We solve it by alternating minimization:
\begin{enumerate}
\item[(a)] \emph{SDP step.} For fixed $F$, minimize $\ell(F(p^2))$ over $\ell$ with
$G_\ell\succeq0$, $\operatorname{tr}G_\ell=1$.
\item[(b)] \emph{LP step with cutting planes.} For fixed $\ell$, minimize the linear function
$F\mapsto\ell(F(p^2))$ over $\|F\|_1\le1$, subject to $F(x)\ge0$ at a grid of sample points $x$
of the simplex. After each LP solve, local minimization (Nelder-Mead, started from the
lowest grid values) searches for a point where $F<0$. If one is found it is added as a cut
and the LP is re-solved; otherwise the step ends. This search does not certify copositivity; for the final
witness, nonnegativity is proved exactly (Lemma~\ref{lem:witness}).
\end{enumerate}
Exact minimization in each block of~\eqref{eq:bilinear}, starting from a feasible pair, would give a
nonincreasing objective. The cutting-plane implementation uses a local search for violated copositivity
constraints, so this monotonicity is not guaranteed. No global optimality claim is needed: the final
rational pair is checked independently in exact arithmetic.

In practice, the Robinson form $\sum x_i^3-\sum_{i\ne j}x_i^2x_j+3x_1x_2x_3$~\cite{Robinson}, which after
the substitution $x_i=p_i^2$ is the classical nonnegative ternary sextic that is not a sum of squares, is
the natural seed. It is, however, a stationary point of the alternation (started there, it stays at the
value $-4.0333\cdot10^{-3}$), so the seed is perturbed. Run to convergence, the alternation reaches
$-4.7723\cdot10^{-3}$. This value does not depend on $V$, since the tail construction never sees it. The
final $F$ is a cyclic cubic $a\sum x_i^3-b\,(x_1^2x_2+x_2^2x_3+x_3^2x_1)+c\,x_1x_2x_3$ with
$(a,b,c)\approx(0.0882,0.1667,0.2354)$. The numerical search finds values within about $10^{-10}$ of zero
at the apparent minimum on the simplex. This does not certify copositivity or global optimality.
The proof uses a nearby rational pair whose required properties are verified exactly, as follows.

To pass to exact data we perturb  both objects slightly into the interior of their cones: we replace $F(p^2)$
by $f=F(p^2)+|p|^6/270000$, which makes the Gram matrix of Lemma~\ref{lem:witness}(1) positive definite,
and $\ell$ by $0.98\,\ell+0.02\,\ell_{\rm Gauss}$ (the Gaussian moments), which makes $G_\ell$ positive
definite. We then round to rationals and project exactly onto the linear identities. This costs part of
the violation: the certified value is $\ell(f)=-3.8233\cdot10^{-3}$, compared with the numerical value
$-4.7723\cdot10^{-3}$.

The proof of Theorem~\ref{thm:main} above used two properties of the base functional
$B(A)=\langle\psi,A\psi\rangle$: that $M_3(B)$ is positive definite, with an explicit lower bound on its smallest
eigenvalue, and an upper bound on $B(f)$, which fixes the threshold for $\tau$. Both concern an eigenfunction
with no closed form, and they are proved in Section~\ref{sec:proofB} by rigorous numerical analysis.

\begin{proposition} \label{prop:B}
For the state $\psi$ of Theorem~\ref{thm:main},
$\lambda_{\min}(M_3(B))\ge 1.4510\cdot10^{-5}$ and
$B(f)\le 40.84521$.
For the ground state, $\lambda_{\min}(M_3(B))\ge 1.4135\cdot10^{-5}$ and
$B(f)\le 38.53751$.
\end{proposition}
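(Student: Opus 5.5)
The plan is to replace the unknown eigenfunction $\psi$ by a computed approximation $\tilde\psi$ and to control every moment entering $M_3(B)$ and $B(f)$ by rigorous perturbation bounds. Then $\lambda_{\min}$ and $B(f)$ are transferred from $\tilde\psi$ to $\psi$ by Weyl's inequality and a direct estimate.

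\emph{Step 1 (approximation and energy enclosure).} Compute $\tilde\psi$ in the symmetry sector of $\psi$ (odd in $q_1$, even in $q_2,q_3$, symmetric under $q_2\leftrightarrow q_3$) as a finite Hermite expansion with rational coefficients. Compute its Rayleigh quotient $\rho=\langle\tilde\psi,H\tilde\psi\rangle$ and residual $\eta=\|(H-\rho)\tilde\psi\|$ exactly or in ball arithmetic. These are polynomial operations on Hermite coefficients, so they are finite computations.

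\emph{Step 2 (gap and Temple).} Obtain a certified lower bound $\nu$ for the second eigenvalue in this sector. I would compare $H$ with a separable operator: bound the coupling $-\sum_{i<j}q_i^2q_j^2$ from below by a sum of one-dimensional terms, then truncate each one-dimensional operator in the Hermite basis with a tail bound. Temple's inequality $E\ge\rho-\eta^2/(\nu-\rho)$, together with $E\le\rho$, encloses $E_1$. The standard angle bound $\sin\theta\le\eta/(\nu-\rho)$ then gives $\|\psi-\tilde\psi\|\le\epsilon$ after fixing the phase. The graph-norm error $\|(H+c)(\psi-\tilde\psi)\|$ is controlled similarly, using $\|(H-E)(\psi-\tilde\psi)\|=\|(H-E)\tilde\psi\|\le\eta+|E-\rho|$ and the spectral decomposition.

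\emph{Step 3 (moment errors; main obstacle).} Each entry of $M_3$ is $\langle m_a\phi,m_b\phi\rangle$ with $\deg m_a,\deg m_b\le3$. For $\phi=\psi$ versus $\phi=\tilde\psi$ the difference is bounded by
\[
\|m_a\psi\|\,\|m_b(\psi-\tilde\psi)\|+\|m_a(\psi-\tilde\psi)\|\,\|m_b\tilde\psi\|.
\]
So I need $\|m(\psi-\tilde\psi)\|\le C_m\|(H+c)(\psi-\tilde\psi)\|$ for every monomial of degree $\le3$. This is the relative-boundedness estimate (Lemma~\ref{lem:cube} presumably does this). Cubic monomials are of order $H^{3/2}$ but not bounded by $H$ itself, so I would use $(H+c)^2$ instead. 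The idea is to bound $\|q^\alpha p^\beta u\|^2$ by $\langle u,(H+c)^2u\rangle$ through explicit operator inequalities such as $p_i^2\le 2(H+c)$, $q_i^4\lesssim H+c$, and commutator corrections. Then I would control $\|(H+c)^2(\psi-\tilde\psi)\|$ using the eigenvalue equation and a second residual $\|(H-\rho)^2\tilde\psi\|$. Getting explicit, not too lossy, constants here is the hardest part.

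The same scheme handles $B(f)$. Since $f$ is a momentum sextic, $B(f)=\langle\psi,f(p)\psi\rangle$, and its error is bounded by the same degree-$\le3$ norms, because $f(p)$ is a quadratic form in cubic momentum monomials via Lemma~\ref{lem:witness}. Alternatively, write $B(f)=\sum$ of $\|p^\beta\psi\|^2$-type terms.

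\emph{Step 4 (conclusion).} Compute $M_3(\tilde\psi)$ in ball arithmetic and certify its smallest eigenvalue, for example by an $LDL^\top$ factorization of $M_3(\tilde\psi)-s I$. Bound the perturbation by the Frobenius norm of the entrywise error matrix, and apply Weyl's inequality to get $\lambda_{\min}(M_3(B))\ge1.4510\cdot10^{-5}$. Also take $\tilde B(f)$ plus its error bound to get $B(f)\le40.84521$. Repeat everything in the fully symmetric sector for the ground state, with the gap now to $E$'s next symmetric level.
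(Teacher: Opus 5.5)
\textbf{Verdict: genuine gap in Step~4.} Your Steps 1--2 and your treatment of $B(f)$ follow the paper: the separable comparison operator via $-q_i^2q_j^2\ge-\tfrac12(q_i^4+q_j^4)$, Hermite truncation, Temple's inequality, and $B(f)$ written through the norms $\|p^\beta\psi\|^2$. Step~4, however, would fail. The matrix $M_3(B)$ is severely ill-conditioned. Its diagonal contains $\|p_1^3\psi\|^2=B(p_1^6)$, which is of order $10^2$ for this state, while $\lambda_{\min}\approx1.47\cdot10^{-5}$.

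Write $X:a\mapsto\sum_m a_m\,m\psi$, and $\tilde X$ likewise for $\tilde\psi$. Your entrywise bound gives
$\|M_3(B)-M_3(\tilde B)\|_F\le(\|X\|_F+\|\tilde X\|_F)\,\varepsilon_X$, with $\varepsilon_X^2=\sum_{|m|\le3}\|m(\psi-\tilde\psi)\|^2$. Any bound on the Gram-matrix difference built from the $\|m\delta\|$ carries a factor $\|X\|\ge\|p_1^3\psi\|$.

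The claim leaves only $t-1.4510\cdot10^{-5}\approx1.8\cdot10^{-7}$ of slack, where $t=1.46919\cdot10^{-5}$ for the paper's Ritz vector. Weyl's inequality would therefore need $\varepsilon_X$ of order $10^{-9}$. For a Ritz vector of the paper's quality, the graph-norm estimate gives $\varepsilon_X\le2.38\cdot10^{-5}$. Your perturbation bound is then roughly of order $10^{-3}$, far larger than $\lambda_{\min}$ itself, so it would not even certify $M_3(B)\succ0$.

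The missing idea is to perturb the factor, not the Gram matrix. We have $M_3(B)=X^\dagger X$ and $M_3(\tilde B)=\tilde X^\dagger\tilde X$. Singular values are $1$-Lipschitz, and $\|X-\tilde X\|_F=\varepsilon_X$. Hence $\sigma_{\min}(X)\ge\sqrt t-\varepsilon_X$, i.e.\ $\lambda_{\min}(M_3(B))\ge(\sqrt t-\varepsilon_X)^2$. Now $\varepsilon_X$ only has to be small compared with $\sqrt t\approx3.8\cdot10^{-3}$, and this is what the paper uses ($\varepsilon_X/\sqrt t\approx6\cdot10^{-3}$).

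Step~3 also gives away more than the stated constants allow. You correctly note that cubic monomials are of order $H^{3/2}$, but you do not need to round up to $(H+c)^2$. Since $p$ has weight $2$ and $H$ weight $4$, the products $m^\dagger m$ with $|m|\le3$ are dominated by $T^3$, where $T=H+5$. Lemma~\ref{lem:cube} says exactly this: $\sum_m\|mu\|^2\le20\ip{u}{T^3u}+2\cdot10^5\|u\|^2$. The paper combines it with $\ip{\varphi}{T^3\varphi}\le\|T\varphi\|\,\|T^2\varphi\|\le\kappa^3r_1r_2$ for $\varphi=\tilde\psi-\cos\theta\,\psi$. The component $(1-\cos\theta)\psi$ of $\delta$, which $H-E$ does not see, is handled separately through $\theta$. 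In this way the tiny first residual $r_1\approx2.7\cdot10^{-9}$ enters alongside $r_2\approx2.75\cdot10^{-6}$.

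Bounding $\|m\delta\|$ by $\|(H+c)^2\delta\|$ instead gives $\varepsilon_X$ of order $\kappa^2r_2$ rather than $(\kappa^3r_1r_2)^{1/2}$. That is larger by well over an order of magnitude here. The stated constants leave almost no slack: about $1.2\%$ below $t$, and $3.8\cdot10^{-4}$ above $\tilde B(f)=40.84483$. So with this bound you would need a substantially more accurate $\tilde\psi$. Deriving the relative-boundedness constants by hand instead of from an exact SOS certificate is a legitimate alternative, but it has to be the $T^3$ bound, with explicit constants.
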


\section{Proof of Proposition~\ref{prop:B}}\label{sec:proofB}

All numbers below are rigorous. They are computed in \emph{ball arithmetic}: every real
number is carried as a midpoint together with a guaranteed error radius, and every operation
returns a ball that provably contains the exact result, including all rounding errors. We use
Arb~\cite{Arb} through python-flint, at 200-bit precision. Irrational matrix elements such as $\sqrt n$ (which appear in the Hermite basis)
are therefore enclosed, never merely approximated.

Let $\mathcal S$ be the symmetry sector (odd in $q_1$, even in $q_2,q_3$, symmetric under
$q_2\leftrightarrow q_3$). $H$ leaves $\mathcal S$ invariant and $\psi$ is its lowest
eigenfunction there.
Write $T=H+c$ with $c=5$, so that $T\ge E_0+5>0$; the same constant appears in Lemma~\ref{lem:cube}.

\paragraph{Step 1: a gap in the sector.}
$H-H_{\rm sep}=\tfrac12\sum_{i<j}(q_i^2-q_j^2)^2\ge 0$, where
$H_{\rm sep}=\sum_i \tilde h(q_i,p_i)$ and $\tilde h=\tfrac12p^2+q^4-4q^2+2$. Both operators
preserve $\mathcal S$, so the min-max principle (Courant--Fischer) gives $\lambda_k(H|_{\mathcal S})\ge\lambda_k(H_{\rm sep}|_{\mathcal S})$.
The two lowest even and odd levels of $\tilde h$ are bounded below by a truncation argument in
the Hermite basis of $\tfrac12p^2+\tfrac12w^2q^2$ ($w=4$). On the complement $Q$ of the first
$N_1=20$ basis functions of a given parity one has $Q\tilde hQ\ge\alpha Q$, with
$\alpha=w(2N_1+\text{parity}+\tfrac12)-c_1$ and
$c_1=34$; here $-c_1$ is the minimum of the multiplication operator
$\tilde h-(\tfrac12p^2+\tfrac12w^2q^2)=q^4-12q^2+2$. Replacing $Q\tilde hQ$ by $\alpha Q$ gives an operator
below $\tilde h$ whose low spectrum is that of a $22\times22$ matrix, and by Sylvester's law of inertia (the number of negative pivots in an $LDL^\top$ factorization of $M-\lambda I$, for that matrix $M$,
equals the number of eigenvalues of $M$ below $\lambda$), an exact-sign factorization in ball arithmetic certifies
its eigenvalue bounds:
\[
e_0\ge-0.1991865064,\quad e_1\ge 2.3704667252,\quad
o_0\ge-0.1034946187,\quad o_1\ge 3.5733502037 .
\]
The second level of $H_{\rm sep}$ in $\mathcal S$ is at least
$\min(o_1+2e_0,\ o_0+e_0+e_1)$. Hence
\[
\lambda_2(H|_{\mathcal S})\ \ge\ \rho=2.0677856001 .
\]
In the sector of functions even in every $q_i$ and symmetric under $q_2\leftrightarrow q_3$,
which contains the ground state, the same argument gives $\rho_0=2e_0+e_1\ge1.9720937124$.

\paragraph{Step 2: Temple enclosure.}
A Ritz vector $\tilde\psi\in\mathcal S$ (Hermite basis, 30 functions per axis in the relevant
parity, 13950 sector functions) is stored with one coefficient per symmetry orbit, so it lies
in $\mathcal S$. $H$ is applied to $\tilde\psi$: since $V$ raises the Hermite index by at most $4$ in each direction,
$H\tilde\psi$ and $(H-E_1)^2\tilde\psi$ (needed in Step~4) lie in the span of the first $34$ basis functions of that
parity per axis, where they are computed without truncation. With
$\eta=\ip{\tilde\psi}{H\tilde\psi}$ and $\varepsilon\ge\|(H-\eta)\tilde\psi\|$
($\|\tilde\psi\|=1$), Temple's inequality~\cite{Temple} and $\eta<\rho$ give
\[
\eta-\frac{\varepsilon^2}{\rho-\eta}\le E_1\le\eta,\qquad
\sin^2\theta\le\frac{\eta-E_1}{\rho-E_1},
\]
where $\theta$ is the angle between $\tilde\psi$ and $\psi$. Numerically
$\varepsilon\le 2.67\cdot10^{-9}$, which gives the enclosure of Theorem~\ref{thm:main} and
$\sin\theta\le5.9\cdot10^{-9}$.

\paragraph{Step 3: a graph-norm lemma.}
\begin{lemma}\label{lem:cube}
In $\A$,
\[
20\,(H+5)^3+2\cdot10^5-\sum_{|m|\le3}m^\dagger m\;=\;\sum_k g_k^\dagger g_k ,
\]
where the sum on the left runs over the 84 monomials $m=q^\alpha p^\beta$ of degree $\le 3$.
Consequently
$\sum_{|m|\le3}\|mu\|^2\le 20\,\ip{u}{T^3u}+2\cdot10^5\|u\|^2$ for every Schwartz function $u$.
\end{lemma}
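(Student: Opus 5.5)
The identity in Lemma~\ref{lem:cube} is an algebraic statement in $\A$. I would prove it as a sum-of-squares certificate checked in exact rational arithmetic, in the same style as Lemmas~\ref{lem:witness}--\ref{lem:MDelta}.

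\emph{Expanding the left side.} Write $P=20(H+5)^3+2\cdot10^5-\sum_{|m|\le3}m^\dagger m$ and expand it in standard order. The element $(H+5)^3$ has degree $12$, and $\sum m^\dagger m$ has degree $\le 6$, so $P\in\A_{\le12}$ and is Hermitian.

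\emph{Choosing the Gram ansatz.} I would seek $P=w^\dagger G w$. Here $w$ is a vector of Weyl monomials $q^\alpha p^\beta$ of degree $\le 6$, and $G$ is a Hermitian matrix with rational entries (real after exploiting $q\mapsto -q$ symmetries and the real structure). Matching standard-ordered coefficients of $w^\dagger G w$ with those of $P$ gives an affine system for $G$. To shrink it, I would restrict $w$ to monomials whose "Newton polytope" half is compatible with that of $P$. The top part of $P$ is $20(\tfrac12|p|^2+V)^3$, and its leading terms $|p|^6$, $V^3$ and mixed terms $|p|^4V$, $|p|^2V^2$ suggest a $w$ made of $p^\beta$ with $|\beta|\le3$, $q^\alpha$ with $|\alpha|\le6$, and mixed monomials $q^\alpha p^\beta$ with $2|\alpha|+... $ weighted degree $\le 6$ in the weighting $\deg q=1,\ \deg p=2$. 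I would also use the symmetry group of $H$ (reflections and axis permutations) to block-diagonalize $G$.

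\emph{Finding and certifying $G$.} I would solve numerically an SDP for $G$ in this affine space, maximizing its minimal eigenvalue. The constants $20$ and $2\cdot 10^5$ are presumably chosen to give slack. I would then round $G$ to rationals and project exactly onto the affine coefficient identities, which is possible because the space is defined over $\Q$. Positive definiteness would follow from an exact $LDL^\top$ factorization, and that factorization directly yields $g_k=\sqrt{d_k}\,(L^\top w)_k$. Strictly, one should write $\sum_k d_k\, h_k^\dagger h_k$ with $d_k>0$ rational, which is equivalent. The main obstacle is this step. Commutator corrections from reordering $p$ past $V$ produce lower-order terms in $P$ with indefinite signs, so the numerical $G$ must be strictly interior, with margin exceeding the rounding/projection error. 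If the first ansatz is infeasible, I would enlarge $w$ to all monomials of total degree $\le6$.

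\emph{Consequence.} For Schwartz $u$ everything is defined. Pairing the identity with $u$ gives $20\langle u,(H+5)^3u\rangle+2\cdot10^5\|u\|^2-\sum\|mu\|^2=\sum_k\|g_ku\|^2\ge0$, using $\langle u,m^\dagger m u\rangle=\|mu\|^2$, which holds since $m^\dagger$ is the formal adjoint acting on Schwartz space. Since $T=H+c$ with $c=5$, this is the claim.
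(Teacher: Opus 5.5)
Your proposal is correct and follows essentially the same route as the paper: a Gram certificate over the 259 Weyl monomials $q^\alpha p^\beta$ with $|\alpha|+2|\beta|\le6$, block-diagonalized by axis parity, found numerically, rounded and projected exactly onto the linear identities (following Peyrl--Parrilo), and certified positive definite by exact $LDL^\top$, after which pairing with a Schwartz function $u$ gives the inequality. One small sharpening: the reality of $G$ comes specifically from time reversal, via the normalization $v=i^{|\beta|}q^\alpha p^\beta$ (each such $v$ is the real operator $q^\alpha\partial^\beta$), while the reflections $q_i\mapsto-q_i$ only split $G$ into the eight parity blocks.
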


The certificate is a Gram representation
$\sum_{a,b}G_{ab}v_a^\dagger v_b$ over the 259 elements $v=i^{|\beta|}q^\alpha p^\beta$ of
weighted degree $|\alpha|+2|\beta|\le6$. Time reversal makes $G$ real, and axis parity splits
it into eight blocks of sizes $20$ to $47$. The identity is exact over $\Q$, and every block is
positive definite by exact $LDL^\top$ (smallest pivot $1.26$). The float solution was rounded
and projected exactly onto the linear constraints, following  Peyrl and Parrilo~\cite{PP}.
Lemma~\ref{lem:cube} is an explicit instance of a standard relative-boundedness estimate: for globally
elliptic operators such as $H$, powers of $H$ dominate all monomials $q^\alpha p^\beta$ of lower weighted
degree~\cite{Shubin,HR82,Helffer84}. The general theory gives the existence of such constants; the
certificate supplies usable ones.

\paragraph{Step 4: the error in all degree-3 directions.}
Let $\varphi=\tilde\psi-\cos\theta\,\psi$ (phase chosen so that $\cos\theta\ge0$) and
$\delta=\tilde\psi-\psi$. Since $(H-E_1)\psi=0$ we have
$(H-E_1)^j\varphi=(H-E_1)^j\tilde\psi$. The spectral measure of $\varphi$ lies in
$[\rho,\infty)$, and there $(\lambda+c)/(\lambda-E_1)\le \kappa:=(\rho+c)/(\rho-E_1)\le15.61$. Hence
\[
\ip{\varphi}{T^3\varphi}\le\|T\varphi\|\,\|T^2\varphi\|\le \kappa^3 r_1r_2,\qquad
r_j\ge\|(H-E_1)^j\tilde\psi\|,
\]
with $r_1\le2.67\cdot10^{-9}$ and $r_2\le2.75\cdot10^{-6}$, both computed in ball arithmetic
with $E_1$ replaced by its enclosure. Since
$\delta=\varphi-(1-\cos\theta)\psi$, Lemma~\ref{lem:cube} gives
\[
\varepsilon_X^2:=\sum_{|m|\le3}\|m\delta\|^2\le 20\big(\ip{\varphi}{T^3\varphi}^{1/2}+(1-\cos\theta)(E_1+c)^{3/2}\big)^2
+2\cdot10^5\big(\sin\theta+1-\cos\theta\big)^2,
\]
so that $\varepsilon_X\le2.38\cdot10^{-5}$.

\paragraph{Step 5: the moment matrix.}
Let $X$ (respectively $\tilde X$) be the map $\mathbb{C}^{84}\to L^2$ sending the coefficient
vector $a$ to $\sum_m a_m\,m\psi$ (respectively $\sum_m a_m\,m\tilde\psi$). Then
$M_3(B)=X^\dagger X$, $M_3(\tilde B)=\tilde X^\dagger\tilde X$, and
$\|X-\tilde X\|\le\|X-\tilde X\|_F=\varepsilon_X$. Since $\psi$ and $\tilde\psi$ have definite parity in each axis, $\langle m_a\psi,m_b\psi\rangle=0$
unless $m_a$ and $m_b$ have the same parity class $\big((\alpha_i+\beta_i)\bmod 2\big)_i$, so both
matrices are block diagonal, with blocks of sizes $10$, $18$ (three), $4$ (three) and $8$; the off-block
zeros are exact for both. The matrix $M_3(\tilde B)$ is computed in
ball arithmetic from 190 moments of $\tilde\psi$, and an $LDL^\top$ inertia count certifies
$\lambda_{\min}(M_3(\tilde B))\ge t=1.46919\cdot10^{-5}$. Therefore
\[
\lambda_{\min}(M_3(B))=\sigma_{\min}(X)^2\ge(\sqrt t-\varepsilon_X)^2\ge1.4510\cdot10^{-5}.
\]

\paragraph{Step 6: the value $B(f)$.}
Each term of $f$ is $f_e\,(p^\beta)^\dagger p^\beta$ with $\beta=e/2$ a degree-3 monomial, and
$\big|\|p^\beta\psi\|^2-\|p^\beta\tilde\psi\|^2\big|\le\varepsilon_X
(2\|p^\beta\tilde\psi\|+\varepsilon_X)$. Hence
$|B(f)-\tilde B(f)|\le3.78\cdot10^{-4}$ with $\tilde B(f)=40.84483$, which gives
$\tau^*=B(f)/|\ell(f)|<10683.13$.

The ground-state case is identical, in the sector described at the end of Step~1 (the ground
state is unique, hence fully symmetric, and lies in that sector).  Table~\ref{tab:cert} collects the certified quantities for both states.

\begin{table}[h]\centering
\begin{tabular}{lcc}\toprule
 & ground state & first excited level\\\midrule
sector & even $q_1,q_2,q_3$; $q_2\!\leftrightarrow\!q_3$ & odd $q_1$, even $q_2,q_3$; $q_2\!\leftrightarrow\!q_3$\\
energy & $1.53482611032904478$ & $1.61488137077986920$\\
gap bound $\rho$ & $1.97209$ & $2.06778$\\
$\sin\theta$ & $\le6.3\cdot10^{-9}$ & $\le5.9\cdot10^{-9}$\\
$\varepsilon_X$ & $\le2.52\cdot10^{-5}$ & $\le2.38\cdot10^{-5}$\\
$\lambda_{\min}(M_3(\tilde B))$ & $\ge1.43258\cdot10^{-5}$ & $\ge1.46919\cdot10^{-5}$\\
$\lambda_{\min}(M_3(B))$ & $\ge1.4135\cdot10^{-5}$ & $\ge1.4510\cdot10^{-5}$\\
$\tilde B(f)$ & $38.53712$ & $40.84483$\\
$|B(f)-\tilde B(f)|$ & $\le3.81\cdot10^{-4}$ & $\le3.78\cdot10^{-4}$\\
$\tau^*$ & $10079.54$ & $10683.13$\\\bottomrule
\end{tabular}
\caption{Certified quantities (energies to $\pm2\cdot10^{-17}$; lower bounds rounded down, upper bounds up).}\label{tab:cert}
\end{table}

The precision of the energies is a by-product of the small Ritz residual: the Temple enclosure
has width of order $\varepsilon^2/(\rho-\eta)$. The proof uses only $\varepsilon_X\ll\sqrt t$, which
holds with a large margin ($\varepsilon_X/\sqrt t\approx6\cdot10^{-3}$).

\paragraph{Proof of Proposition~\ref{prop:spec}.}
The eight parity classes are $H$-invariant and span $L^2(\R^3)$, and permutations of the axes
map classes with the same number of odd directions onto each other. So it suffices to treat
the four classes $(+,+,+)$, $(-,+,+)$, $(-,-,+)$ and $(-,-,-)$. Steps~1 and~2 are used in each class,
now without the $q_2\leftrightarrow q_3$ symmetry.

\emph{Second eigenvalue.} $H-H_{\rm sep}\ge0$ (Step~1), and both operators preserve every parity class, so by min-max, $H\ge H_{\rm sep}$ gives
$\lambda_k(H|_{\rm class})\ge\lambda_k(H_{\rm sep}|_{\rm class})$ for every $k$. The
eigenfunctions of $H_{\rm sep}$ in the class are products of one-dimensional eigenfunctions of
$\tilde h$ of the prescribed parities, so its second level there is
$\sum_i a_0(\sigma_i)+\min_i\big(a_1(\sigma_i)-a_0(\sigma_i)\big)$, where $\sigma_i$ is the parity in
direction $i$, $a_k(+)=e_k$, $a_k(-)=o_k$, and $e_k,o_k$ are the certified bounds of Step~1. For the
four classes this gives $\lambda_2(H|_{\rm class})\ge\rho_{\rm class}$ with $\rho_{\rm class}=1.9720$,
$2.0677$, $2.1634$ and $3.3663$ respectively.

\emph{Lowest eigenvalue.} In each class a Ritz vector is built as in Step~2 (Hermite basis of the
class parities), and Temple's inequality~\cite{Temple}, with the class's own $\rho_{\rm class}$, encloses
the lowest eigenvalue of the class as listed. Every enclosure lies below $1.9720\le\rho_{\rm class}$.
Each class therefore has exactly one eigenvalue below $1.9720$, and it is simple within the class. \qed

\section{Discussion}\label{sec:discussion}

\paragraph{The momentum block.}
The construction rests on one structural fact (Lemmas~\ref{lem:rows} and~\ref{lem:MDelta}): the level-3
rows never reach the top-degree pure-momentum monomials, so the $10\times10$ block of $M_3$ indexed by the
cubic momentum monomials $p^\beta$ (the lower-right corner of $M_3$ when these monomials are listed last),
the \emph{momentum corner}, is invisible to them. 
This holds at every
level (Appendix~\ref{sec:corner}). At level~3 it has a sharper consequence (Proposition~\ref{prop:corner}): the
relaxation is strictly feasible at a given $E$ if and only if it is strictly feasible with the momentum corner
deleted, so the momentum corner carries no information about which energies admit a strictly feasible point. 
This complements the picture in~\cite{K26}, where the high momentum moments were identified as the
unbounded directions of the stationary relaxation: for the eigenstate relaxation, the top-degree momentum
moments are genuinely free (Proposition~\ref{prop:corner}). Theorem~\ref{thm:main} says more: with a
witness in the gap between the two cones, moving along the free direction reaches points realized
by no state. The special role of the momentum moments has also been observed in a different setting:
for the Kronig--Penney model, positivity constraints built from position-type moments determine the
band gaps but leave the minimum energy unconstrained, and a finite lower bound on it appears only once
higher powers of the momentum enter the positivity matrices~\cite{BBB22}.

\paragraph{Numerical consequence.}
The Schur-complement bound shows why the full semidefinite program can be badly conditioned near the
lowest accepted energy: the required $\tau$ can grow without bound as the reduced matrix $A$ approaches
singularity, so feasible points there may require very large momentum moments. Numerical
instability of bootstrap programs has been reported before~\cite{LZ}. For~\eqref{eq:H} at $K=3$ we computed
the lowest energy at which each version is feasible, with the rows and parity reduction of
Definition~\ref{def:boot}, via CVXPY:
\begin{center}
\begin{tabular}{lcc}\toprule
 & full $M_3$ & momentum corner deleted\\\midrule
CLARABEL & $1.0697$ & $1.0599$\\
MOSEK & $1.0992$ & $1.0624$\\\bottomrule
\end{tabular}
\end{center}
By Proposition~\ref{prop:corner}, deleting the momentum corner does not change the energies admitting
strictly feasible points. At $E=1.065$, which both solvers report infeasible for the full program, we add
$\tau\Delta_\ell$ to the reduced program's minimum-norm solution, with $\ell$ given by Gaussian moments
and $\tau=10^8$. Numerically, the resulting full moment matrix has smallest eigenvalue
$+6.6\cdot10^{-5}$. These computations indicate feasibility at $E=1.065$; the solver-reported thresholds
exceed this value by about $0.005$ and $0.034$, respectively.
These are floating-point computations, not certified ones, and none of Theorem~\ref{thm:main} rests on them:
its proof uses no semidefinite solver, and every certificate is checked in exact or ball arithmetic.

\paragraph{Scope of the constraint set.}
Theorem~\ref{thm:main} concerns the relaxation of Definition~\ref{def:boot}, which imposes the rows generated
by single monomials, as is standard~\cite{HHK,BH1,BDDJK}; Lemma~\ref{lem:rows} covers every such row, so the
theorem is complete as stated. Some implementations impose more: linear combinations of rows generated by
monomials of higher degree whose high-degree parts cancel, leaving a relation in $\A_{\le6}$ (equivalently,
auxiliary moments of degree above $6$ that do not enter the moment matrix). For generators of degree
$\le8$, the spurious point survives this strengthening: every combination
$\sum_X a_X\,X(H-E)+\sum_Y b_Y\,(H-E)Y$ that lies in $\A_{\le6}$ 
is annihilated by $\Delta$, for every $E$. Indeed, writing the combination as $W(H-E)+\sum_Y b_Y[H,Y]$ with
$W=\sum_X a_XX+\sum_Y b_YY$, the top-degree part of $W(H-E)$ is $W_{\rm top}V_4\ne0$, of degree $\deg W+4$, while
$\deg[H,Y]\le10$; so membership in $\A_{\le6}$ forces $\deg W\le6$. Then $EW\in\A_{\le6}$, and the conditions on the
coefficients $a_X,b_Y$ do not depend on $E$. For fixed admissible coefficients, $\Delta$ of the combination is
affine in $E$; exact rank computations over $\Q$ show that it vanishes at five distinct values of $E$, hence at
every $E$; since $B$ satisfies all such relations automatically, $L_\tau$ remains feasible and
spurious. The hierarchy in~\cite{ZK26} imposes stationarity for every polynomial generator whose reduced
commutator lies in the retained moment space, with no bound on the generator degree. The degree-$8$
verification above does not establish that stronger statement for the present construction.

\section{Reproducibility}\label{sec:repro}

All files are available as ancillary files with the arXiv version of this paper; the \texttt{README} lists the
commands.
\begin{itemize}
\item \texttt{certifyK.py} produces \texttt{certK3\_A2.txt} (and \texttt{certK4\_A2.txt}),
containing $f$, $\ell$ and the SOS Gram matrix, and checks Lemmas~\ref{lem:witness}--\ref{lem:MDelta}.
\item \texttt{sosT3.py} and \texttt{exactT3.py} produce and certify
\texttt{lemmaT3\_A2\_c5.txt} (Lemma~\ref{lem:cube}).
\item \texttt{rig3x.py 2 4 20 30 5 100} carries out Steps 1--6 for the excited level, and
\texttt{rig3x.py 2 4 20 30 5 000} for the ground state.
\item \texttt{rigclass.py 2 4 20 22 5 P}, with $P\in\{000,100,110,111\}$, certifies
Proposition~\ref{prop:spec}.
\item \texttt{wcut.py} (with \texttt{conelib.py}) runs the full-cone alternation.
\item \texttt{m3check.py} is an independent floating-point recomputation, written from the definitions
of this paper, of $E_1$, $E_0$, $\lambda_{\min}(M_3(\tilde B))$ and $\tilde B(f)$; it agrees with the values
of Section~\ref{sec:proofB} to 9--10 digits. \texttt{fdcheck.py} computes the low spectrum by fourth-order
finite differences on a cubic grid, a method sharing no code with the rest; after Richardson extrapolation it
reproduces the eight levels of Proposition~\ref{prop:spec}, with multiplicities $1,3,3,1$, to within
$3\cdot10^{-5}$, and finds no other eigenvalue below $3.8$.
\item \texttt{eband3.py} computes the lowest accepted energy with and without the momentum corner,
and \texttt{tailcheck3.py} carries out the direct check of Section~\ref{sec:discussion}.
\item \texttt{indep/} contains an independent re-verification of every exact ingredient:
Lemmas~\ref{lem:witness}--\ref{lem:MDelta}, the closure checks, Lemma~\ref{lem:cube} and the
$K=4$ certificate. It
uses a separately written Weyl-algebra implementation based on the standard-ordering symbol
product, cross-checked against explicit differential operators.
\end{itemize}

\appendix
\section{The SOS certificate for the witness}\label{app:G}

Let $v(p)$ be the vector of the 15 monomials of degree 4 in $p$. The certificate of
Lemma~\ref{lem:witness}(1) is the exact identity
\[
|p|^2 f(p)=v(p)^{\!\top}G\,v(p)=\sum_{m,m'}G_{m,m'}\,m\,m' ,
\]
with $f$ as in Table~\ref{tab:fl} and $G$ the rational symmetric $15\times15$ matrix below, indexed by
pairs of degree-4 monomials. Only the 39 nonzero entries with $m\le m'$ are listed, and
$G_{m',m}=G_{m,m'}$ by symmetry. Exact $LDL^\top$ gives 15 positive pivots, the smallest
$\approx1.74\cdot10^{-5}$, so $G\succ0$.

{\small
\renewcommand{\arraystretch}{1.25}
\begin{center}
\begin{tabular}{@{}ll@{\qquad}ll@{}}\toprule
$m\cdot m'$ & $G_{m,m'}$ & $m\cdot m'$ & $G_{m,m'}$\\\midrule
$p_3^4\cdot p_3^4$ & $\tfrac{11906231}{135000000}$ & $p_1p_3^3\cdot p_1p_3^3$ & $\tfrac{298705151}{4050000000}$\\
$p_3^4\cdot p_2^2p_3^2$ & $-\tfrac{116990393}{8100000000}$ & $p_1p_3^3\cdot p_1p_2^2p_3$ & $\tfrac{155230591}{8100000000}$\\
$p_3^4\cdot p_2^4$ & $-\tfrac{198401119}{4500000000}$ & $p_1p_3^3\cdot p_1^3p_3$ & $-\tfrac{52260383}{562500000}$\\
$p_3^4\cdot p_1^2p_3^2$ & $-\tfrac{616453673}{8100000000}$ & $p_1p_2p_3^2\cdot p_1p_2p_3^2$ & $\tfrac{20375309}{4050000000}$\\
$p_3^4\cdot p_1^2p_2^2$ & $\tfrac{366704579}{4050000000}$ & $p_1p_2p_3^2\cdot p_1p_2^3$ & $-\tfrac{195587303}{8100000000}$\\
$p_3^4\cdot p_1^4$ & $-\tfrac{198401119}{4500000000}$ & $p_1p_2p_3^2\cdot p_1^3p_2$ & $\tfrac{155223067}{8100000000}$\\
$p_2p_3^3\cdot p_2p_3^3$ & $\tfrac{474222323}{4050000000}$ & $p_1p_2^2p_3\cdot p_1p_2^2p_3$ & $\tfrac{2546989}{506250000}$\\
$p_2p_3^3\cdot p_2^3p_3$ & $-\tfrac{104543761}{1125000000}$ & $p_1p_2^2p_3\cdot p_1^3p_3$ & $-\tfrac{97786231}{4050000000}$\\
$p_2p_3^3\cdot p_1^2p_2p_3$ & $-\tfrac{195594827}{8100000000}$ & $p_1p_2^3\cdot p_1p_2^3$ & $\tfrac{118548797}{1012500000}$\\
$p_2^2p_3^2\cdot p_2^2p_3^2$ & $\tfrac{241597423}{2250000000}$ & $p_1p_2^3\cdot p_1^3p_2$ & $-\tfrac{418106923}{4500000000}$\\
$p_2^2p_3^2\cdot p_2^4$ & $-\tfrac{123310801}{1620000000}$ & $p_1^2p_3^2\cdot p_1^2p_3^2$ & $\tfrac{241545043}{2250000000}$\\
$p_2^2p_3^2\cdot p_1^2p_3^2$ & $-\tfrac{86943277}{1620000000}$ & $p_1^2p_3^2\cdot p_1^2p_2^2$ & $-\tfrac{434701391}{8100000000}$\\
$p_2^2p_3^2\cdot p_1^2p_2^2$ & $-\tfrac{217394971}{4050000000}$ & $p_1^2p_3^2\cdot p_1^4$ & $-\tfrac{116928293}{8100000000}$\\
$p_2^2p_3^2\cdot p_1^4$ & $\tfrac{73339159}{810000000}$ & $p_1^2p_2p_3\cdot p_1^2p_2p_3$ & $\tfrac{40751437}{8100000000}$\\
$p_2^3p_3\cdot p_2^3p_3$ & $\tfrac{149382613}{2025000000}$ & $p_1^2p_2^2\cdot p_1^2p_2^2$ & $\tfrac{483151907}{4500000000}$\\
$p_2^3p_3\cdot p_1^2p_2p_3$ & $\tfrac{31049081}{1620000000}$ & $p_1^2p_2^2\cdot p_1^4$ & $-\tfrac{616472033}{8100000000}$\\
$p_2^4\cdot p_2^4$ & $\tfrac{119072651}{1350000000}$ & $p_1^3p_3\cdot p_1^3p_3$ & $\tfrac{59263639}{506250000}$\\
$p_2^4\cdot p_1^2p_3^2$ & $\tfrac{366694247}{4050000000}$ & $p_1^3p_2\cdot p_1^3p_2$ & $\tfrac{597370153}{8100000000}$\\
$p_2^4\cdot p_1^2p_2^2$ & $-\tfrac{23386447}{1620000000}$ & $p_1^4\cdot p_1^4$ & $\tfrac{119045273}{1350000000}$\\
$p_2^4\cdot p_1^4$ & $-\tfrac{24800141}{562500000}$ & &\\
\bottomrule
\end{tabular}
\end{center}}

\section{The momentum corner at every level}\label{sec:corner}

The construction of Section~\ref{sec:construction} uses one structural fact: the level-3 rows
never reach the top-degree pure-momentum monomials. This holds at every level, and it determines
what the top-degree momentum block of the moment matrix can contribute. For $K\ge2$ write
$\pi_{2K}(A)=\sum_{|\beta|=2K}c_{0\beta}\,p^\beta$ for the top-degree pure-momentum part of
$A\in\A_{\le2K}$, and call the monomials $p^\beta$, $|\beta|=K$, the \emph{momentum corner} of
$M_K$.

\begin{proposition}\label{prop:corner}
Let $K\ge2$, and let $\ell$ be any real linear functional on the forms of degree $2K$ in
$p_1,p_2,p_3$ whose Gram matrix $G_\ell=(\ell(p^{\beta+\beta'}))_{|\beta|=|\beta'|=K}$ is positive
definite (for instance the Gaussian moments $\ell(p^e)=\mathbb E\,\xi^e$, $\xi\sim N(0,I_3)$). Then
$\Delta_\ell=\ell\circ\pi_{2K}$ satisfies:
\begin{enumerate}
\item $\Delta_\ell(1)=0$, $\Delta_\ell$ is Hermitian, and $\Delta_\ell$ annihilates every eigenstate
and stationarity row of Definition~\ref{def:boot} at level $K$, for every $E$;
\item $M_K(\Delta_\ell)=G_\ell\oplus0$, supported on the momentum corner.
\end{enumerate}
Consequently, for $K\le3$ and every $E$: the level-$K$ relaxation is strictly feasible (a feasible $L$ with
$M_K(L)\succ0$) if and only if there is an $L_0$ satisfying conditions 1, 3 and 4 of
Definition~\ref{def:boot} whose moment matrix \emph{with the momentum corner deleted} is positive
definite. In that case $L_0+\tau\Delta_\ell$ is strictly feasible for all sufficiently large $\tau$.
\end{proposition}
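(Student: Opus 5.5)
The plan is to reduce everything to bookkeeping of standard-ordered expansions, generalizing the argument behind Lemmas~\ref{lem:rows} and~\ref{lem:MDelta} from $K=3$ to general $K$, and then to close with a Schur-complement argument. The basic tool is the reordering formula for $p^\beta q^\gamma$ in standard order. Its standard-ordered expansion is $\sum_{\kappa\le\beta,\gamma}c_\kappa\,q^{\gamma-\kappa}p^{\beta-\kappa}$ with $c_{0}=1$. Hence a pure-momentum term arises only when $\kappa=\gamma$, and it then has degree $|\beta|-|\gamma|$. So moving $q$'s past $p$'s can produce pure-momentum monomials only of degree strictly smaller than the total degree, unless no $q$ is present.

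\emph{Step 1 (Hermiticity and $\Delta_\ell(1)=0$).} $\Delta_\ell(1)=0$ is immediate. For $A=q^\alpha p^\beta$ we have $A^\dagger=p^\beta q^\alpha$. By the reordering formula, its standard-ordered expansion contains a pure-momentum monomial of degree $2K$ only if $\alpha=0$, in which case $A^\dagger=A$. Hence $\pi_{2K}(A^\dagger)=\overline{\pi_{2K}(A)}$ coefficientwise, and since $\ell$ is real, $\Delta_\ell(A^\dagger)=\overline{\Delta_\ell(A)}$.

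\emph{Step 2 (eigenstate and stationarity rows).} For an eigenstate row $O(H-E)$ or $(H-E)O$ with $O=q^\alpha p^\beta$ and $\deg O\le 2K-4$, the pieces are $EO$, $\tfrac12 O|p|^2$, $\tfrac12|p|^2O$, $OV$ and $VO$. The first three have degree $\le 2K-2$. In $VO$ every monomial still carries $q$'s, and in $OV=q^\alpha p^\beta V$ the pure-momentum terms have degree $\le|\beta|-\deg V'$, where $V'$ is the relevant monomial of $V$. So no pure-momentum term of degree $2K$ appears, and $\Delta_\ell$ kills the row for every $E$.

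For stationarity rows I use the fact, stated in Lemma~\ref{lem:rows} and true for all $K$ by the same leading-symbol computation, that $[H,O]$ has degree $\deg O+2$ unless $O$ is pure-position. Admissible $O$ are therefore either pure-position of degree $\le 2K$, or of degree $\le 2K-2$. I then bound the pure-momentum parts of the two commutators separately. In $[\tfrac12|p|^2,q^\alpha p^\beta]$, pure-momentum terms need $|\alpha|\le 2$ and have degree $\le\deg O$. In $[V,O]$, the pure-momentum terms have degree $\le\deg O-2$. Both are $<2K$, except in the pure-position case, where the pure-momentum part has degree $\le 1$.

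\emph{Step 3 (moment matrix).} The entries are $m_a^\dagger m_b=p^{\beta}q^{\alpha}q^{\alpha'}p^{\beta'}$, each of degree $\le 2K$. By the reordering formula, a degree-$2K$ pure-momentum term occurs only if $\alpha=\alpha'=0$ and $|\beta|=|\beta'|=K$, and then the entry equals $p^{\beta+\beta'}$. This gives $M_K(\Delta_\ell)=G_\ell\oplus0$.

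\emph{Step 4 (the equivalence).} If $L$ is strictly feasible, deleting the momentum corner leaves a principal submatrix of a positive definite matrix, which is positive definite. Conversely, suppose $L_0$ satisfies conditions 1, 3 and 4. Write
\[
M_K(L_0)=\begin{pmatrix}A&C\\ C^\dagger&D\end{pmatrix}
\]
with $A\succ0$ the matrix with the corner deleted. By Steps 1--3, $L_\tau=L_0+\tau\Delta_\ell$ still satisfies conditions 1, 3 and 4 (note $\Delta_\ell(1)=0$), and
\[
M_K(L_\tau)=\begin{pmatrix}A&C\\ C^\dagger&D+\tau G_\ell\end{pmatrix}.
\]
Its Schur complement $D+\tau G_\ell-C^\dagger A^{-1}C$ is positive definite once $\tau\lambda_{\min}(G_\ell)>\lambda_{\max}(C^\dagger A^{-1}C-D)$. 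Hence $M_K(L_\tau)\succ0$, and $L_\tau$ is strictly feasible.

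The restriction $K\le 3$ plays no role in this argument. It only marks where the statement has content: for $K\ge4$ the vector of $H-E$ lies in the kernel of every feasible moment matrix, outside the momentum corner, so both sides of the equivalence fail.

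The main obstacle is Step 2 for stationarity rows at general $K$. One must justify carefully the admissibility classification, namely that $[H,O]$ has degree exactly $\deg O+2$ for non-pure-position $O$ (the top quartic part of $V$ has a nonvanishing gradient paired with $\nabla_p$ of $q^\alpha p^\beta$). One must also track that no cancellation-free pure-momentum term of degree $2K$ survives. I would handle this through the leading-symbol (Poisson bracket) description of commutators together with the reordering formula.
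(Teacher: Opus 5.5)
Your proposal is correct and follows the paper's proof essentially step for step: standard-ordering bookkeeping for Hermiticity, the eigenstate and stationarity rows and $M_K(\Delta_\ell)$, then the same Schur-complement argument. The admissibility step you flag is closed in the paper along your leading-symbol line: the degree-$(\deg O+2)$ terms $i\beta_i\,q^\alpha(\partial_iV)_3\,p^{\beta-e_i}$ of $[V,O]$ have distinct $p$-exponents for distinct $i$, so they cannot cancel. There is one harmless slip: $V$ has constant term $6$, so $VO$ does contain pure-momentum terms when $\alpha=0$, but only of degree $|\beta|\le 2K-4$, which does not affect the conclusion.
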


\begin{proof}
(1) $\Delta_\ell(1)=0$ since $2K>0$. Hermiticity: $(q^\alpha p^\beta)^\dagger=q^\alpha p^\beta+{}$terms of
lower degree, so $\pi_{2K}(A^\dagger)$ has the complex-conjugate coefficients of $\pi_{2K}(A)$, and
$\ell$ is real. In the standard-ordered expansion, a term of $q^\alpha p^\beta\,V$ or of
$V q^\alpha p^\beta$ is free of $q$ only if $\alpha=0$ and all position factors of a monomial of $V$
are removed by derivatives or are absent (the constant term); such a term has degree at most
$\deg O$. Hence for an eigenstate row with $\deg O\le2K-4$, the pure-momentum part of
$O(H-E)$ and of $(H-E)O$ has degree at most $\deg O+2\le2K-2$. For a stationarity row write
$O=q^\alpha p^\beta$. If $\beta\ne0$, the standard-ordered expansion of $[V,O]=q^\alpha[V,p^\beta]$
contains, for each $i$ with $\beta_i\ge1$, the terms $i\beta_i\,q^\alpha(\partial_iV)_3\,p^{\beta-e_i}$ of
degree $\deg O+2$, where $(\partial_iV)_3=8q_i^3-2q_i\sum_{j\ne i}q_j^2\ne0$ is the cubic part of
$\partial_iV$. Distinct $i$ give distinct $p$-exponents and all other terms have lower degree, so
these cannot cancel, while $[\tfrac12|p|^2,O]$ has degree at most $\deg O$. Hence
$[H,O]\in\A_{\le2K}$ forces $\deg O\le2K-2$, and then every pure-momentum term of $[H,O]$ has degree
at most $\deg O\le 2K-2$. If $\beta=0$, then $[V,O]=0$ and the pure-momentum terms of
$[\tfrac12|p|^2,q^\alpha]$ have degree at most $1$. Every admissible row
therefore has $\pi_{2K}=0$.

(2) The product $m_a^\dagger m_b$ of two monomials of degree $\le K$ has a pure-momentum term of degree
$2K$ only if both are pure-momentum of degree $K$, and then it equals $p^{\beta_a+\beta_b}$.

For the consequence, write $M_K(L_0)=\begin{pmatrix}A&C\\C^\dagger&D\end{pmatrix}$ with $A$ the matrix
with the momentum corner deleted. If $A\succ0$, then
$M_K(L_0+\tau\Delta_\ell)=\begin{pmatrix}A&C\\C^\dagger&D+\tau G_\ell\end{pmatrix}$ has Schur complement
$D+\tau G_\ell-C^\dagger A^{-1}C\succ0$ as soon as $\tau\lambda_{\min}(G_\ell)>\lambda_{\max}(C^\dagger A^{-1}C-D)$,
and $L_0+\tau\Delta_\ell$ satisfies conditions 1, 3 and 4 by part (1). Conversely, $A$ is a
principal submatrix of $M_K(L)$.
\end{proof}

So for $K\le3$ the momentum corner of $M_K$ carries no information about which energies admit a strictly
feasible point. For $K\ge4$ parts (1) and (2) still hold, but no feasible functional has a positive definite
moment matrix (Appendix~\ref{rem:higher}), so the equivalence says nothing there.

\section{Higher levels}\label{rem:higher}

Two statements must be kept apart: non-tightness (a spurious point exists in the feasible set) and the
interior statement of Theorem~\ref{thm:main} (a spurious point whose moment matrix is strictly positive
definite).

\emph{Non-tightness persists.} The exact part (Lemmas~\ref{lem:witness}--\ref{lem:MDelta} with degree-$2K$ data) is
also certified at $K=4$ (\texttt{certK4\_A2.txt}, $\ell(f)\approx-1.2956\cdot10^{-3}$) and checked independently. It
was found by the same alternation, seeded with a genuine degree-8 witness (a seed built from the degree-6 one
collapses to the trivial value $0$); the numerical value there is $-1.6403\cdot10^{-3}$. Since $B(f)<\infty$ for every
eigenstate, this shows, with no numerical analysis, that the level-4 eigenstate relaxation is not tight; the same
holds at $K=3$.

\emph{Strict positive definiteness does not.} At $K\ge4$, the polynomial $H-E$ has degree at most $K$.
Taking $O=H-E$ in the eigenstate rows gives $L((H-E)^2)=0$; positive semidefiniteness then places its
coefficient vector in $\ker M_K$. Thus the strict positive definiteness asserted in
Theorem~\ref{thm:main} is impossible at these levels. This does not rule out interior relative to the
equality constraints. A natural next step at $K=4$ is to prove positivity on the complement of the
forced kernel, with a degree-4 analogue of Lemma~\ref{lem:cube} providing the needed moment control.

\end{document}